\newcommand*\Let[2]{\State #1 $\gets$ #2}
\algrenewcommand\algorithmicrequire{\textbf{VARs:}}
\newtheorem{theorem}{Theorem}
\newtheorem{invariant}{Invariant}
\newcommand*\circled[1]{\tikz[baseline=(char.base)]{
		\node[shape=circle,draw,inner sep=1pt] (char) {#1};}}
\newcommand{\preprint}{1}
\title{Highly Available Queue-oriented Speculative Transaction Processing}
\author{\href{https://orcid.org/0000-0003-0754-0504}{\includegraphics[scale=0.06]{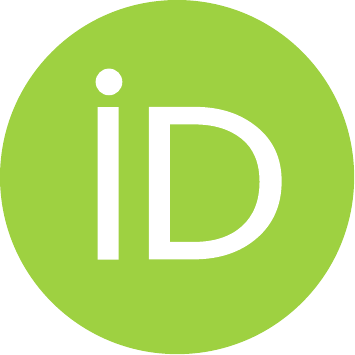}\hspace{1mm}Thamir~M.~Qadah$^1$}~
	and~\href{https://orcid.org/0000-0003-2779-6080}{\includegraphics[scale=0.06]{orcid.pdf}\hspace{1mm}Mohammad~Sadoghi$^2$}
		\\
		Exploratory Systems Lab \\
		$^1$ Computer Science Department, Umm Al-Qura University\\
		$^2$ Computer Science Department, UC Davis\\
		$^1$\texttt{tmqadah@uqu.edu.sa},
	 	$^2$\texttt{msadoghi@ucdavis.edu}
}
\newcommand{\ackmsg}{{\small \sf ACK}}
\newcommand{\quecc}{{\small \sf QR-Store}}
\newcommand{\qcmw}{{\small \sf QueCC}}
\newcommand{\calvin}{{\small \sf Calvin}}
\newcommand{\qstore}{{\small \sf Q-Store}}
\newcommand{\qszero}{{\small \sf Q-Store}}
\newcommand{\qsone}{{\small \sf QR-Store-rf1}}
\newcommand{\qstwo}{{\small \sf QR-Store-rf2}}
\newcommand{\calvinone}{{\small \sf Calvin-rf1}}
\newcommand{\calvintwo}{{\small \sf Calvin-rf2}}
\newcommand{\qsonenode}{{\small \sf QueCC-R}}
\newcommand{\qsfournode}{{\small \sf QR-Store}}
\newcommand{\raft}{{\small \sf RAFT}}
\newcommand{\zab}{{\small \sf ZAB}}
\newcommand{\eat}[1]{} 
\newcommand{\plotwidth}[2]{\ifdefined\preprint
	#2
	\else
	#1
	\fi}
\begin{document}
\maketitle

\begin{abstract}
Deterministic database systems have received increasing attention from the database research community in recent years. Despite their current limitations, recent proposals of distributed deterministic transaction processing systems demonstrated significant improvements over systems using traditional transaction processing techniques (e.g., two-phase-locking or optimistic concurrency control with two-phase-commit). However, the problem of ensuring high availability in deterministic distributed transaction processing systems has received less attention from the research community, and this aspect has not been analyzed and evaluated well. This paper proposes a generic framework to model the replication process in deterministic transaction processing systems and use it to study three cases. We design and implement QR-Store, a queue-oriented replicated transaction processing system, and extensively evaluate it with various workloads based on a transactional version of YCSB. Our prototype implementation QR-Store can achieve a throughput of 1.9 million replicated transactions per second in under 200 milliseconds and a replication overhead of 8\%-25\% compared to non-replicated configurations. 
\end{abstract}

\keywords{database systems \and deterministic transaction processing\and distributed transaction management\and performance evaluation \and high availability}

\section{Introduction
\label{sec:introduction}}
Cloud providers continue to provide a virtual computing infrastructure that provides a higher amount of main memory and virtual CPU cores. Currently, for instance, Amazon Web Services provides virtual instances configurations that are equipped with up to $448$ vCPUs, $24 TB$ of memory, and $100 Gbps$ network connectivity.\footnote{https://aws.amazon.com/ec2/instance-types/high-memory/}
Therefore, there is a growing demand for utilizing this modern computing infrastructure efficiently. 

Many deterministic database systems are proposed in the research literature to utilize modern computing infrastructures more efficiently (e.g., \cite{kallman_h-store_2008, thomson_calvin_2012, abadi_overview_2018}). Recent proposals of distributed deterministic transaction processing (DTP) systems demonstrated significant improvements over systems using traditional transaction processing techniques (e.g., 2PL/OCC+2PC). While distributed DTP systems have shown significant improvements in transaction processing performance, many database applications require high availability. For example, users of online banking applications desire that it is available $24\times7$. Furthermore, cloud providers' service level agreements promise at least four nines  (i.e., $99.99\%$ availability).\footnote{https://azure.microsoft.com/en-us/support/legal/sla/mysql} Database replication for traditional transaction processing protocols is well-studied (e.g., \cite{ozsu_principles_2020, kemme_database_2010, kemme_database_2010-1}). In contrast, the problem of ensuring high availability via replication in distributed DTP systems has received less attention from the research community, and this aspect of distributed DTP systems has not been analyzed and evaluated well. 

We consider database systems where the database state can be partitioned and distributed across multiple nodes (e.g., \cite{kallman_h-store_2008, thomson_calvin_2012, faleiro_lazy_2014, faleiro_rethinking_2015, lu_star_2019, lu_aria_2020, zamanian_chiller_2020}). Furthermore, the partitioned database state is replicated for high availability. With deterministic transaction processing, the replication is simplified because transaction histories are deterministic and strictly serializable. Strict serializability implies that transaction execution of conflicting transactions follows a {\em single} order across all partitions and replicas. By requiring that the predetermined order is followed during execution and in the replicated state, the replication process is simplified because the new database state is guaranteed to be equivalent due to deterministic execution. Thus, the key challenge with replication in distributed DTP systems is the negative impact of performing replication on the transaction processing performance.  

To address this challenge, we build on our highly efficient queue-oriented transaction processing paradigm \cite{qadah_quecc:_2018, qadah_queue-oriented_2019, qadah_q-store_2020}. In our earlier work \qcmw{}\cite{qadah_quecc:_2018}, we addressed the issue of the overhead of exiting concurrency control techniques under high-contention workloads and demonstrate that speculative and queue-oriented transaction processing can improve the system's throughput by up to two orders of magnitudes over the state-of-the-art centralized (non-partitioned) transaction processing systems. In \qstore{}\cite{qadah_q-store_2020}, we improve the efficiency of distributed and partitioned DTP systems by employing queue-oriented transaction processing techniques and demonstrate up to $22\times$ better performance. 

In this paper, we propose a generalized framework to analyze the design space of distributed and replicated deterministic transaction processing systems and extend \qcmw{} and \qstore{} with replication support for high availability. Based on the proposed framework, we propose a primary-copy approach and perform eager, speculative, queue-oriented replication to mitigate the overhead of replication in distributed DTP systems. Our approach amortizes the cost of replication and transaction processing over batches of transactions and processes these batches in parallel on a replicated clusters of server nodes. Furthermore, we exploit the fact that deterministic transaction execution and replication in DTP systems are independent, which allows us to either fully or partially hide the cost of replication while ensuring safe and strictly serializable transaction execution. 

Our contributions in this paper can be summarized as follows: 

\begin{itemize}
	\item we propose a generalized framework for DTP systems, a unified replication API for DTP systems, and apply the framework on three systems from the literature (Section \ref{sec:genfw});
	
	\item we design \quecc{}, a highly available queue-oriented and replicated transaction processing system version of \qstore{} (Section \ref{sec:qrep-tp});
	
	\item we prototype \quecc{} and propose optimization techniques to improve the performance of state-of-the-art in queue-oriented deterministic transaction processing (Section \ref{sec:qrep-impl});
	
	\item we extensively evaluate \quecc{} using standard benchmarks such as YCSB (Section \ref{sec:qrep-eval}).
\end{itemize}


\begin{figure}[!t]
	\centering
	\includegraphics[width=\plotwidth{0.98}{0.7}\linewidth,trim=0cm 0cm 0cm 0cm,clip]{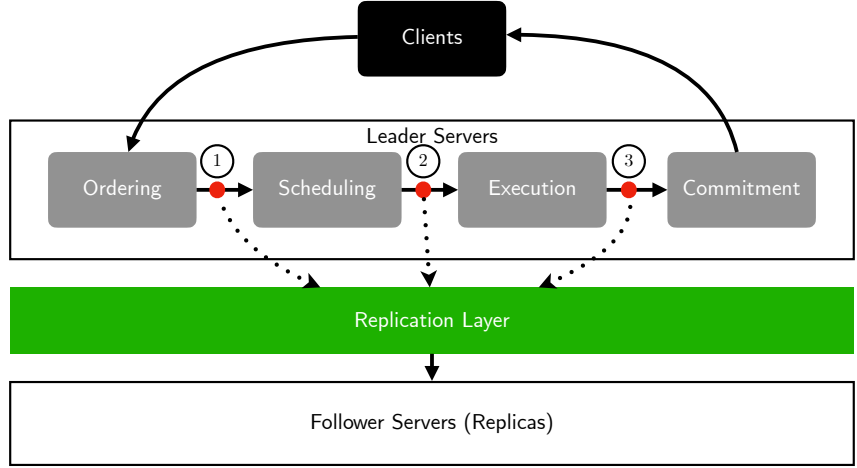}
	\caption[]{Generalized Deterministic Transaction Processing Framework}
	\label{fig:general-fw}
	\vspace{-4mm}
\end{figure}

\section{A Generalized Replication Framework}
\label{sec:genfw}
We propose a generalized DTP framework. In Figure \ref{fig:general-fw}, we illustrate a simple framework that adopts the client-server architecture. The system is composed of a set of clients that sends transactions for processing to a set of servers. Clients receive commitment responses from servers when their submitted transactions are committed to the database. The transaction processing workflow by a leader set of servers is composed of four generic stages for processing transactions deterministically with strict serializability. These steps are {\em ordering}, {\em scheduling}, {\em execution}, and {\em commitment}. The processing work in each stage can be done in a parallel and distributed fashion to improve the system's performance (e.g., by multiple worker threads deployed on a set of machines). It is important to note that DTP systems use batching to improve the throughput performance of the system.

A DTP system ensures strict serializability of transaction histories by predetermining the {\em order} of transactions execution/commitment before scheduling them for execution. In the {\em scheduling} stage, the scheduling algorithm needs to guarantee that the execution and the commitment of the transactions do not violate the predetermined order. In between stages, we define replication points (red circles in Figure \ref{fig:general-fw}). These are points in the transaction processing workflow where replication can be done. The output of a stage can be replicated using the replication layer to achieve system high availability. A set of follower servers (replicas) get the replicated output from a stage and proceed to use it as an input to the next stage.
 


In our framework, the replication layer is a logical layer. One implementation approach is by using a shim that interacts with a replicated coordination service such as Zookeeper \cite{apache_zookeeper_website} and etcd \cite{etcd_website} or publish-subscribe systems like Kafka\cite{apache_kafka_website}. In this approach, the service serves as a physical middleware between the leader set of servers and the follower set of servers. Existing work uses Zookeeper in their prototype implementations (e.g., \cite{thomson_calvin_2012}), but Zookeeper is not designed for this purpose. In our experiments, we observed that Zookeeper could not handle the replication load when the replication request rate is high. Therefore, using a service like Kafka appears to be a better option, and we plan to study that in future work.

Another implementation approach for the the replication layer is having the shim implements a protocol such as Paxos \cite{lamport_paxos_2002},  Viewstamped replication \cite{oki_viewstamped_1988}, or Raft\cite{ongaro_search_2014} directly. This integrated approach has a lower overhead (no need for additional dedicated nodes for the replication layer). However, it involves a more tight integration with the DTP system and is more complex to realize. 

To realize both approaches in a generalized way, we introduce a simple API that abstracts away the complexity and hides the details behind the underlying implementations. The API is compromised of two simple functions \textproc{replicateData} and \textproc{receiveData}. More details about this API are presented in Section \ref{sec:qrep-impl}.


Our proposed framework is general enough to admit existing work on deterministic transaction processing systems as specialized implementations. We discuss three case studies to illustrate the applicability of the proposed generalized framework to provide a unified framework to understand DTP systems.

\subsection{Case Study: Calvin}
\label{sec:case-studies-calvin}
\calvin{} \cite{thomson_calvin_2012} is one of the first DTP systems that supports replication. In \calvin{}, the {\em ordering} stage performs epoch-based batching of transactions and it is called {\em sequencing}. \calvin{} uses the replication point that follows the {\em ordering} stage and replicates batches of sequenced transactions. In the {\em scheduling} stage, \calvin{} uses a deterministic locking algorithm to schedule transactions for execution. In deterministic locking, the order of lock acquisition follows the predetermined transaction order by the sequencing layer. \calvin{} is a distributed DTP system and requires the use of a distributed commit protocol (CP) in the {\em commitment} stage. However, it avoids using the traditional heavyweight two-phase commit protocol and uses a lightweight CP that exploits deterministic execution. Transactions in \calvin{} commit when all the operations of the distributed transactions complete. The CP aborts transactions when the transaction has a logic-induced abort, and it is aborted deterministically across all partitions and replicas. In the absence of a logic-induced abort, transactions are committed, and the commit response is sent to the clients by the sequencing node that originally received the transaction and sequenced it. 

\calvin{}'s original proposal \cite{thomson_calvin_2012} proposed replicating the output of the {\em ordering} stage. However, based on our framework, it is possible to use other replication points. For example, the updated records in the execution stage can be logged and replicated before commitment.

\subsection{Case Study: Q-Store} 
\label{sec:case-studies-qs}
\qstore{} \cite{qadah_q-store_2020} is also another distributed DTP system, but unlike \calvin{}, it combines the {\em ordering} stage and the {\em scheduling} stage into a single parallel stage called {\em planning}. The {\em execution} stage uses the concept of execution queues (EQs) of operations as an execution primitive while \calvin{} uses the concept of a transaction as an execution primitive. \qstore{} focuses on distributed transaction processing without replication. The {\em planning} stage maps batches of transactions to execution queues tagged with execution priorities. The execution stage executes them based on their priorities, and the commitment stage maps them back to transactions and sends responses to clients. 

Replication in \qstore{} can use any one of the replication points. The first replication point occurs in the middle of planning and is similar to the replication in \calvin{} (i.e., replicating sequence batches of transactions). At follower nodes, the replicated sequence is planned into execution queues. Interestingly, if the first replication point is used, it is possible to have heterogeneous configurations of servers. For example, a group of servers can follow \calvin{}'s DTP approach while the others can follow \qstore{}'s.

When using the second replication point, which is after the {\em planning} stage, a novel replication scheme emerges. Because the execution primitive of \qstore{} is a set of execution-queues (a.k.a. plans), the set is replicated to follower servers, and the follower servers can take the replicated plans and use them in the next stages. With this approach, \qstore{} is also required to replicate transaction contexts so that in the commitment stage, the replicas can map execution queues back to transactions for commitment. 

\qstore{} can also use the third replication point, which also introduces yet another novel replication scheme. In this case, instead of creating traditional logs, \qstore{} creates plans of execution queues containing write-only operations of updated records. When replicated successfully, it is fed to the execution stage at the replicas, and no specialized stage is needed to process the replicated plans. Furthermore, only the last write operation on the record needs to be inserted in the write-only execution queues.

\subsection{Case Study: QueCC} 
\label{sec:case-studies-qc}
\qcmw{} \cite{qadah_quecc:_2018} is a single node DTP system that is designed and optimized for multi-socket, many-core machines. \qcmw{} uses the same concepts as \qstore{} in terms of having {\em planning} and {\em execution} stages, but all stages are parallel but not distributed by design. \qcmw{} can be extended to become a replicated DTPS. In this case, the leader server set contains only a single node that contains the entire database state, and its state is replicated using the replication layer. Compared to \qstore{}, \qcmw{} does not exploit partitioning and horizontal scalability; however, it can scale vertically by using more cores. Furthermore, it is possible to have heterogeneous hardware configurations for replicas where replicas don't have the same hardware specifications as the leader node.


\begin{figure*}[!t]
	\includegraphics[width=\plotwidth{0.8}{0.98}\textwidth,trim=0cm 0cm 0cm 0cm,clip]{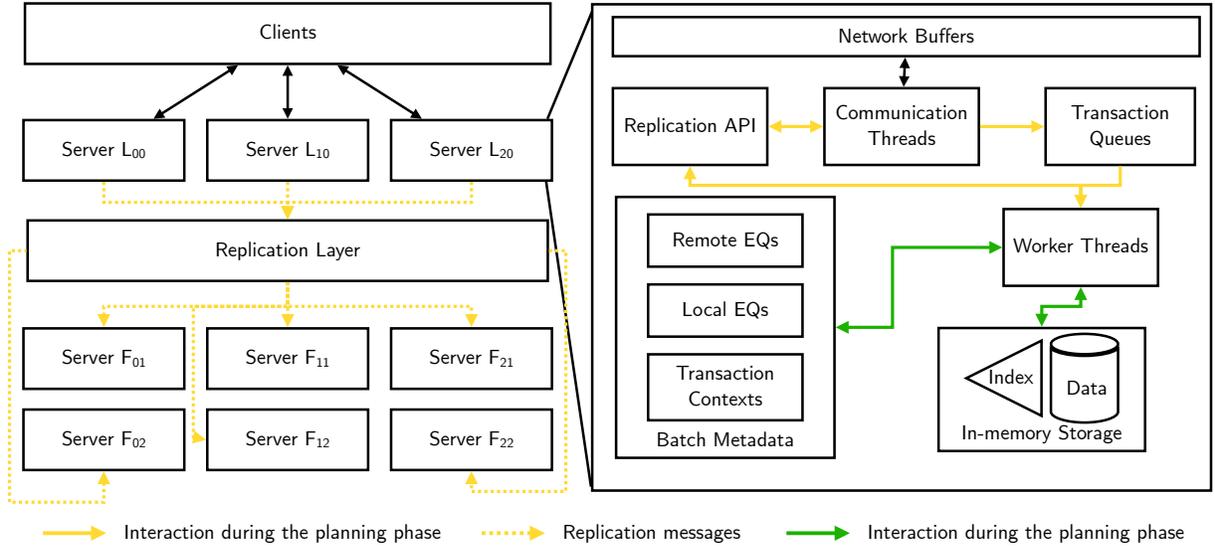}
	\centering
	\vspace{-3mm}
	\caption{System Architecture}
	\label{fig:sys-arch-r}
	\vspace{-4mm}
\end{figure*} 

\section{Highly Available Queue-oriented Transaction Processing}
\label{sec:qrep-tp}
Based on the generalized framework described in Section \ref{sec:genfw}, we focus on designing replication schemes for \qstore{} and study their impact on the system's performance. We build \quecc{} which is a replicated version of \qstore{} and give some overview of \quecc{}.

\subsection{QR-Store's Architecture}

As a distributed DTP system, \quecc{} runs on a cluster of nodes. Each node holds a partition of the database. It supports processing multi-partition transactions where a transaction may access records from different partitions. Each partition is replicated independently with a replication factor $rf$. For example, if $rf=2$ for partition $p_0$, then the system has three nodes hosting $p_0$. One of them is the leader node, while the others are followers. On the left side of Figure \ref{fig:sys-arch-r}, we show an example system architecture with three partitions and a replication factor of three (i.e., $rf=2$). Visually, horizontal grouping of nodes implies a cluster of \quecc{} nodes comprising a full replica of the distributed database instance, while vertical grouping implies replication groups. For example, nodes $L_{00}$, $L_{10}$, and $L_{20}$ form a cluster instance of \quecc{}, while nodes $L_{00}$, $F_{01}$, and $F_{02}$ form a replication group. Note that replication messages are communicated within a replication group only, while all other messages related to processing transactions are communicated within a cluster instance.  This communication scheme ensured minimal communication among nodes in \quecc{}.

On the right side of Figure \ref{fig:sys-arch-r}, we show the key components internal to a server node. Each node receives client transactions that are processed by communication threads into a set of client {\em Transaction Queues}. {\em Worker threads} on each node process client transactions in two phases: {\em planning} and {\em execution}. Note that we consider the commitment stage as part of the execution phase. In the planning phase, worker threads create execution queues (EQs) that access a sub-partition of the node's partition. To facilitate scheduling of EQs during the execution phase in \quecc{}, each worker thread in \quecc{} tags its EQs with a priority value. This value can be static or dynamic, but we assume statically predetermined priorities. There are {\em remote} EQs and {\em local} EQs. Remote EQs are executed at remote nodes as transaction fragments in them access other remote partitions. In addition to EQs, transaction contexts are maintained, which captures transaction dependencies and other transactions metadata. EQs and transaction contexts are stored in the {\em Batch metadata}, which distributed shared data structure. Furthermore, worker threads in the leader set of nodes use the {\em Replication API} to facilitate replication of the Batch metadata to the replicas. 

During the execution phase, worker threads execute and commit EQs based on their priorities. For example, say we have two EQs $q_i$ and $q_j$. The fact that $q_i$ can be either remote or local is orthogonal, and the same applies to $q_j$. Let $pr(q)$ denote the priority of an EQ $q$. 
\quecc{} maintains a global execution invariant such that $q_i$ is executed before $q_j$ {\em if and only if} $pr(q_i) > pr(q_j)$. Maintaining this global execution invariant with a cluster ensures a single global order of conflicting operations, which produce strict serializable histories. In Figure \ref{fig:sys-arch-r}, yellow arrows depict interactions during the planning phase while green arrows depict interactions during the execution phase.

%

\subsection{Replicated Planning Algorithm}

\begin{algorithm}
	\caption{Planning phase
		\label{alg:planning}}
	\begin{algorithmic}[1]
		\Require{$CTQ$ client transaction queue, $C:$ nodes in the cluster instance, $TC:$ transaction contexts data structures, $s:$ status of the current node }
		\Function{planBatch}{$s,bid,p$}
		\Let{$B$}{$\{\}$}
		\If{isLeader($s$)}
			\While{not $B$.ready()}
				\Let{$m$}{$CTQ$.pop()} 
				\Let{$EQ$}{\textrm{planMessge}($m$, $TC$)}
				\Let{$B$}{$B \oplus EQ$}
			\EndWhile
	
		\State \Return{\textproc{deliverBatch}($B$)} 
			
		\Else
		\State \Return{\textproc{receiveData}($(bid, p)$, $(B, TC)$, \textproc{deliverBatch}($B$))}
		\EndIf
		
		\EndFunction

		\Function{deliverBatch}{$B$}
			\Let{$LEQ$}{$\{q \in B | ~\textrm{isLocal}(q)\}$}
			\Let{$REQ$}{$\{q \in B | ~\textrm{isRemote}(q)\}$}
			\State \textrm{setLocalEQs}($LEQ$)
			\State \textrm{sendRemoteEQs}($REQ$, $C$)
			\State \textproc{replicateData}($(bid, p)$,$(B, TC)$)
		\EndFunction
	\end{algorithmic}
\end{algorithm}

We start by describing the planning algorithm. Algorithm \ref{alg:planning} presents pseudocode for the planning phase. To simplify our presentation, we assume the availability of some global variables. 

$CTQ$ is a variable for the queue holding client transaction. Communication threads push into this queue as they receive transaction messages from clients. 

$C$ is the set of nodes composing the cluster instance. For example, suppose a worker thread running in the planning phase calls \textproc{PlanBatch} on server $L_{00}$. In this case, $C=\{L_{00}, L_{10}, L_{20}\}$.

$TC$ is the data structure that holds the transaction contexts for each planned transaction and holds necessary transaction metadata (e.g., the number of operations and their dependencies). 

The variable $s$ is the status of the node running the worker thread. For instance, for server $\{L_{00}\}, s=L$, and for $\{F_{01}\}, s=F$. When $s=L$, isLeader($s$) $=$ true.

The \textproc{PlanBatch} function is called by each worker thread in each server. Each thread starts by computing the batch identifier $bid$, which determines the order between batches created at different epochs. In our prototype implementation, we use monotonically increasing numbers for batch identifiers. Therefore, for a batch created at epoch $0$, its batch identifier is also $0$. $p$ is the global priority value of the worker thread. As mentioned previously, these values can be either static or dynamic. The only requirement is that no two threads have the same priority. In our implementation, we use static priority values for nodes and worker threads.

Depending on the status of the server node, the planning phase takes two different routes. In case of a node being in the leader set, the worker thread follows Lines $4-9$, reads a message from $CTQ$, and {\em plans} the message (Line $6$). 

When planning a message, the read/write set of the transaction is analyzed. When the read/write set of a transaction includes access to a remote partition, a transaction fragment is created and is placed into a remote EQ destined to the node hosting the target partition. Thus, knowing the full read/write set and the record-to-partition mapping is necessary for planning. The planMessage function returns a set of EQs, and they are {\em merged}  with previously planned EQs. Merging EQs means that transaction fragments accessing the same partition are inserted into the same EQ. 

The call of $B$.ready() at Line $4$ determines when the batch is {\em ready} for delivery. Once the batch is ready, it is delivered to the respective nodes (Lines $14-20$). Local EQs are set in the local partition of the Batch metadata distributed data structure. Remote EQs are sent to their respective nodes to be installed into the remote partitions of the batch metadata. Finally, (in Line $19$) the planned EQs and the transaction contexts are replicated using the replication API (i.e., calls \textproc{replicateData}) to the replica groups (e.g., for $L_{00}$ they are replicated to nodes $F_{01}$ and $F_{02}$).

When the \textproc{planBatch} function call is made by a follower node, it calls the \textproc{receiveData} (Line $11$) and provides the \textproc{deliverBatch} function as the \verb|callback| function. This way when the replicated plans are received, the \textproc{deliverBatch} is called to deliver the planned EQs to node in the replica cluster instance. Using Figure \ref{fig:sys-arch-r} as our example, node $F_{01}$ delivers the replicated batch to nodes $P_1(F1)$ and $P_2(F1)$. 


\subsection{Speculative Queue-oriented Replication Protocol}

As we described in Section \ref{sec:case-studies-qs}, there are many possible replication schemes that can be used with \quecc{}. We propose using the second replication point (from Figure \ref{fig:general-fw}), which is before the execution stage, to perform the replication of EQs and TC using the replication API. The EQs and TC are serialized into a byte string payload and replicated via the replication layer. 

\begin{figure}[!t]
	\centering
	\includegraphics[width=\plotwidth{0.9}{0.6}\linewidth, trim=0cm 0cm 0cm 0cm,clip]{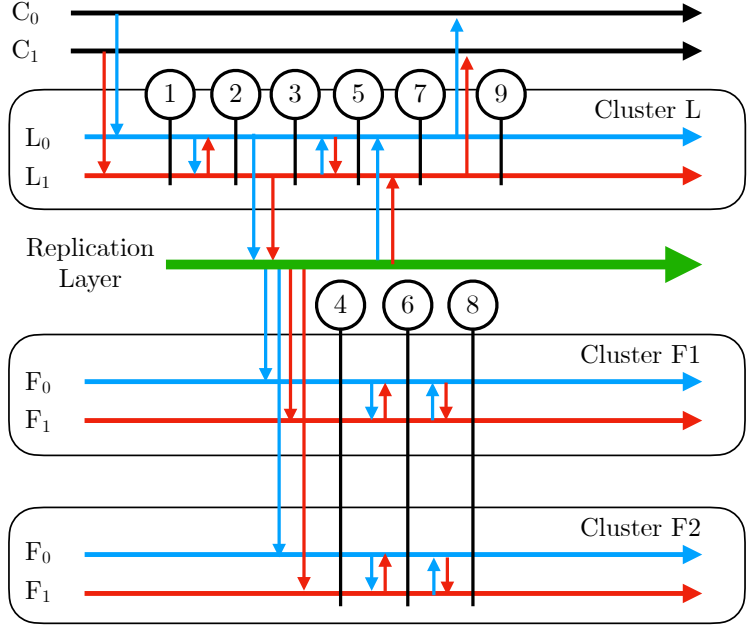}
	\caption{Speculative Execution and Replication Timeline Example}
	\label{fig:repl-comm}
	\vspace{-4mm}
\end{figure}

\textbf{Speculative EQ Replication}.
We propose a queue-oriented speculative approach to replication. The replication is speculative because \quecc{} speculates that the replication would be successful and proceed with the execution phase. The speculation is verified before the commitment stage. Thus, we effectively hide the EQs' replication latency by performing it concurrently with EQs' execution. 

Figure \ref{fig:repl-comm} illustrates an example timeline (time goes from left to right) of the replicated transaction processing workflow. A client is associated with a server in the leader set that is considered the {\em home server} for that client. At the start (before \circled{1}), clients send transactions to home servers. In this example, $C_0$ sends transactions to server $L_{00}$ (i.e., the blue partition) while $C_1$ sends transactions to $L_{10}$ (i.e., the red partition). At \circled{1}, the batch is ready for delivery, and both leaders send their planned remote EQs to the other node. At \circled{2}, the leaders submit their replication requests to the replication layer (i.e., using \textproc{replicateData} API call). Leader nodes start the execution phase immediately without waiting for the outcome of the replication at \circled{3} following the speculative replication approach. Between \circled{3} and \circled{5} processing acknowledgments messages are exchanged within the leader cluster instance. The replication layer ensures that replication requests are delivered reliably to the followers by \circled{4} (i.e., using \textproc{receiveData} API call). Replicated plans are exchanged in the follower clusters by \circled{6}. The replication layer responds to the leader set nodes by \circled{7}. After \circled{7}, leader nodes safely proceed with the commitment stage and commit transactions. At the follower clusters, the execution phase starts at \circled{6}, and the commitment stage starts at \circled{8}. The commitment stage at the follower nodes requires acknowledgments from nodes in their cluster instance to ensure that multi-partitioned transactions are processed successfully by all participating partitions (i.e., between \circled{6} and \circled{8}). By \circled{9} leader nodes respond to clients. 

\textbf{Discussion}
Note that the replication layer can respond to the replication request by the leader set of nodes before the followers receive the EQs. This invariant is stated as follows:

\begin{invariant}[Replication Invariant]
	\label{inv:repl}
The leader nodes receive acknowledgments of their replication request from the replication layer if and only if the replication layer guarantees that followers eventually receive replicated data.
\end{invariant}

It is the responsibility of the replication layer implementation to ensure the eventual delivery of replicated data. The above invariant allows some flexibility in implementing the replication layer, which can be a middleware-based or an integrated implementation.

\begin{figure}
	\centering
	\includegraphics[width=\plotwidth{0.98}{0.7}\linewidth]{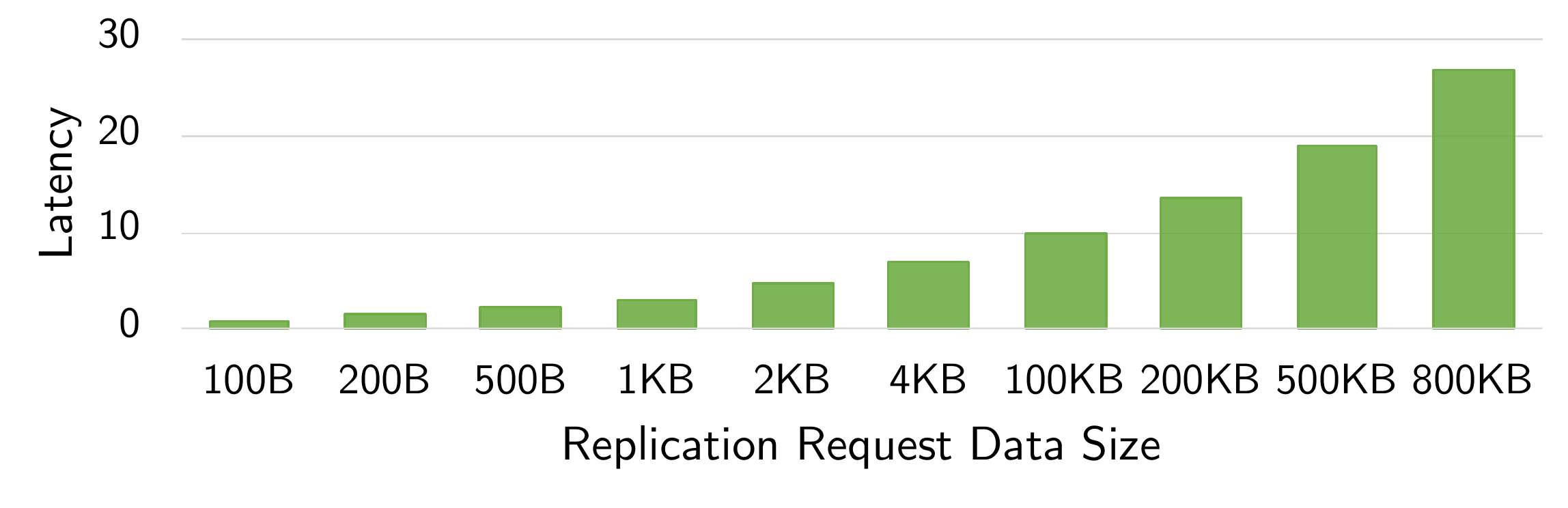}
	\vspace{-4mm}
	\caption{Zookeeper latency micro-benchmark. Latency is measured in milliseconds.}
	\label{fig:zklatency}
	\vspace{-5mm}
\end{figure}


\textbf{Replication Payload Compression}. 
Replication payload is a function of the batch size. Therefore for large batch sizes, they can be in the order of a few hundred kilobytes. At this scale of message sizes, the latency can be undesirably too high. Figure \ref{fig:zklatency} shows the result of a micro-benchmark of submitting $100$ concurrent requests to Zookeeper (as a replication service) while varying the payload size from $100$ bytes to $800$ kilobytes. Note that Zookeeper can only support a maximum of $1$ megabytes of data stored as a single Zookeeper node. We can observe that for large message sizes, the latency can reach up to $26$ milliseconds. 

To reduce the payload size of replicated data in \quecc{}, we compress replication data only using Snappy \cite{snappy_2021}, we observed a reduction of payload sizes by $60\%$.

\begin{algorithm}
	\caption{Execution algorithm
		\label{alg:execution}}
	\begin{algorithmic}[1]
		\Require{$BM$ batch metadata}
		\Function{executeBatch}{$tid$}
		\While{not $BM$.done()}
		\Let{$q$}{$BM$.\textproc{getTop($tid$)}} \label{lst:line:gettop}
		\While{not $q$.empty()}
		\Let{$f$}{$q$.pop()}			
		\State \textproc{executeTF($f$)} \label{lst:line:extf}
		\State \textproc{resolveDependency($f$)} \label{lst:line:rtfdep}
		\EndWhile
		\If{isRemote($q$)}
		\State \textproc{sendAck($q$)}\label{lst:line:sendack}
		\Else
		\State \textproc{updateTC($q$)}\label{lst:line:updateTC}
		\EndIf
		\EndWhile		
		\EndFunction		
	\end{algorithmic}
\end{algorithm}

\subsection{Speculative Execution Algorithm}
In this section, we present the execution algorithm in \quecc{}. The algorithm is simple, and its pseudo-code is presented in Algorithm \ref{alg:execution}. $BM$ is a reference to the Batch metadata data structure, which is assumed to have global access. A worker thread at the execution phase keeps working on executing EQs until all EQs are processed. It retrieves the next available EQ at Line \ref{lst:line:gettop} using \textproc{getTop}. The returned EQ must satisfy the following conditions:

\begin{enumerate}
\item Condition 1: if a worker thread $i$ gets EQ $q$ using \textproc{getTop}, no other worker thread $j$ gets $q$. The thread identifier $tid$ is used to ensure this condition is satisfied.
\item Condition 2: The read/write sets of transaction fragments in $q$ \textbf{do not} overlap with read/write sets of any other transaction fragment in EQs that remains in $BM$. 
\item Condition 3: $q$ has the highest priority in $BM$
\end{enumerate}

These conditions ensure the following global execution priority invariant is maintained across all worker threads in the cluster when executing Line \ref{lst:line:extf}. 

\begin{invariant}[Global Execution Priority Invariant]
	\label{inv:exec}
Across all nodes in a cluster, transaction fragments from higher priority EQ \textbf{are always} executed before transaction fragments from lower priority EQs.	 
\end{invariant}

After executing a transaction fragment, we need to resolve any data dependencies of that fragment (Line \ref{lst:line:rtfdep}). An example of a data dependency is a transaction fragment that performs the following operation $f: x = x + y$. In this case, $x$ and $y$ are records that belong to different partitions. Reading record $y$ is needed to resolve the dependency of computing the new value of $x$. Hence, to resolve the dependency on $x$, we need to send the value of $y$ to the node executing the transaction fragment $f$, which is the node that holds record $x$. The transaction contexts maintain the state of transactions, and they are updated during execution. If $q$ is a remote EQ, an ACK is sent to the original planning node (Line \ref{lst:line:sendack}). When the ACK is received by the planning node, the transaction contexts of relevant transactions are updated (e.g., updating the number of fragments that are completed). If $q$ is local (i.e., $q$ is planned by the same node that executed $q$), the transaction contexts are updated locally (Line \ref{lst:line:updateTC}). 

This execution is speculative because transaction fragments from different transactions are executed and the commitment is done at a later stage. For example, an EQ $q$ can contain transaction fragments from  $f_1$ and $f_2$ from transactions $t_1$ and $t_2$, respectively. A worker thread executes $f_1$ followed by $f_2$. However, $t_1$ is committed later after executing $f_2$. 

\textbf{Discussion} 
The main problem associated with speculative execution is the notion of cascading aborts \cite{qadah_quecc:_2018}. DTP systems abort only if the transaction has explicit abort logic. For example, a transaction $t_i$ that makes a product purchase would {\em abort if the product's $stock == 0$}. Any transaction that conflicts with $t_i$ will also abort if it reads any records updated by $t_i$ because values written by $t_i$ are not committed and should not be visible. \quecc{} keeps track of transaction conflict information in the form of a dependency graph. The graph is made available to the commitment stage so that transactions are committed according to the correct isolation level. We assume serializable isolation in throughout this paper. However, as shown in \cite{qadah_q-store_2020}, we can also support other isolation levels.

\subsection{Commitment Algorithm}

For transaction commitment, the original planner node act as the transaction coordinator for all transaction it planned. Thus, it requires receiving ACKs for all remote EQs. These ACK messages are communicated to the transaction coordinator node as remote EQs are executed. As an illustration, in Figure \ref{fig:repl-comm}, they are communicated between \circled{3} and \circled{5} for the leader set, and between \circled{6}, and \circled{8} for the replica sets. 

\begin{algorithm}
	\caption{Commitment Algorithm
		\label{alg:commitment}}
	\begin{algorithmic}[1]
		\Require{$BM$ batch metadata}
		\Function{commitBatch}{$tid$} \label{lst:line:ca:func}
		\Let{$T$}{$BM.TC$.\textproc{getTransactions($tid$)}} \label{lst:line:ca:gettxns}
		\Let{$P$}{empty FIFO queue for pending transactions} \label{lst:line:ca:pendq_const}
		\For{$t \in T$} \label{lst:line:ca:commit_iter_start}
			\Let{$status$}{\textproc{commitTxn($t$)}} \label{lst:line:ca:commitTxn1}
			\If{not $status$} \label{lst:line:ca:nocommit}
				\State $P$.push($t$) \label{lst:line:ca:pendq_push}
			\EndIf
		\EndFor \label{lst:line:ca:commit_iter_end}
		
		\While{not $P$.empty()} \label{lst:line:ca:start_pending_check}
			\Let{$status$}{\textproc{commitTxn($P$.head)}} \label{lst:line:ca:commitTxn2} 
			\If{$status$}
				\State $P$.pop() \label{lst:line:ca:pendq_pop}
			\EndIf
		\EndWhile \label{lst:line:ca:end_pending_check}
		\EndFunction		
	\end{algorithmic}
\end{algorithm}

Algorithm \ref{alg:commitment} shows the pseudo-code for the commitment algorithm used by the transaction coordinator nodes. Because a leader node can run multiple planning producing different sets of plans, each planning thread is identified by the $tid$. Thus, the $tid$ is used to commit a transactions planned by a specific planner (Line \ref{lst:line:ca:gettxns}). In Line \ref{lst:line:ca:pendq_const}, $P$ is initialized to an empty FIFO queue to hold transactions pending commit. The order of commitment is concerned only with conflicting transactions. Non-conflicting transactions can commit in any order, and our algorithm allows. In Lines \ref{lst:line:ca:commit_iter_start} to \ref{lst:line:ca:commit_iter_end}, we perform a single iteration to commit transactions. In Line \ref{lst:line:ca:commitTxn1}, the \textproc{commitTxn} function checks if the transaction can commit (i.e., all of its fragments are executed successfully). It returns {\em true} if the transaction $t$ is committed, and {\em false} otherwise. If a transaction cannot commit at this time (Line \ref{lst:line:ca:nocommit}), it is pushed into the pending transaction queue $P$ for a later check, which happens in Lines \ref{lst:line:ca:start_pending_check} to \ref{lst:line:ca:end_pending_check}. A transaction is checked at the head of the queue without removing it (Line \ref{lst:line:ca:commitTxn2}). It is only removed if it is committed (Line \ref{lst:line:ca:pendq_pop}). 



Note that a single-threaded implementation of the commitment algorithm can join all transactions into a single set for commitment. The only requirement is that to ensure that the correct commit order of conflicting transactions is preserved. 

Regardless of the implementation of the commitment algorithm, the commit stage needs to adhere to the following invariant. We use $po(t)$ to denote planning order, and $co(t)$ to denote the commitment order of transaction $t$, respectively.  

\begin{invariant}[Commitment Invariant]
	\label{inv:commit}
For any two conflicting transactions $t_i$ and $t_j$, $co(t_j) > co(t_i) \iff po(t_j) > po(t_i)$.
\end{invariant}

\subsection{Correctness}

Based on the above three invariants, we state the following theorem and provide a proof sketch.

\begin{theorem}
QR-Store's transaction processing protocol is safe and strictly serializable.
\end{theorem}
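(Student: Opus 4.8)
The plan is to split the statement into its two components---strict serializability of the committed history and the safety of replication---and to handle each by composing the three invariants established above. For serializability I would take the planning order $po$ as the candidate total order witnessing serializability and argue that the committed history is conflict-equivalent to the serial schedule induced by $po$. The first step is to observe that any two conflicting transactions $t_i$ and $t_j$ have their conflicting fragments placed into execution queues whose relative priority is fixed at planning time; by the Global Execution Priority Invariant (Invariant \ref{inv:exec}), every node in the cluster executes the higher-priority fragment first, so each conflict is resolved in a single direction consistent with $po$ at \emph{all} nodes and replicas. The second step is to invoke the Commitment Invariant (Invariant \ref{inv:commit}) to align the commit order $co$ with $po$ on conflicting pairs, which rules out a committed transaction depending on an uncommitted (and possibly aborting) one in the wrong direction. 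Combining these, the conflict graph of the committed history is a subgraph of the total order $po$, hence acyclic, so the history is conflict serializable.

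For the \emph{strict} part I would argue that this equivalence order also agrees with real-time precedence. Since batch identifiers $bid$ are assigned monotonically and client responses are emitted only at the end of the commitment stage (point \circled{9} in Figure \ref{fig:repl-comm}), a transaction that a client observes as committed before submitting a later one necessarily receives a strictly smaller planning order. Thus if $t_i$ completes before $t_j$ begins in real time, then $po(t_i) < po(t_j)$, and by the serialization argument $t_i$ precedes $t_j$ in the equivalent serial schedule, which is exactly strict serializability.

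For safety the key is that a leader commits only after the replication layer acknowledges its request (after point \circled{7}), even though execution proceeds speculatively from \circled{3}. By the Replication Invariant (Invariant \ref{inv:repl}), this acknowledgment is received if and only if the replication layer guarantees that followers eventually receive the replicated plans and transaction contexts. Because planning is deterministic and followers re-execute the identical execution queues under the same Global Execution Priority Invariant, each follower converges to the state committed by the leader; durability and replica equivalence therefore follow even under leader failure. The speculative execution performed before the acknowledgment is safe because no committed result is externalized to clients until after \circled{7}: if replication fails, the leader has not yet committed and the speculative effects are simply discarded.

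The main obstacle I anticipate is the careful treatment of speculative execution in the presence of logic-induced aborts. Because fragments from distinct transactions interleave during the execution phase, I must show that the committed \emph{projection} of the speculative schedule---after cascading aborts are applied via the dependency graph---remains conflict-equivalent to $po$ and, crucially, that the set of aborted transactions is identical across leader and follower replicas. Establishing this determinism of aborts, and verifying that the pending-queue mechanism of Algorithm \ref{alg:commitment} commits conflicting transactions exactly in $po$ order while leaving non-conflicting ones unconstrained, is where the real work lies; the three invariants supply the scaffolding, but tying them to the dependency-graph semantics is the delicate step.
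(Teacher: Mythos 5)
Your proposal is correct and takes essentially the same route as the paper: the paper's own (very brief) proof sketch likewise composes the three invariants, treating the planning order---imposed by queue ordering within each planning thread and by thread priorities across threads---as the global order that Invariants \ref{inv:exec} and \ref{inv:commit} preserve through execution and commitment, and deriving safety from the fact that commitment only begins after the replication layer acknowledges the request per Invariant \ref{inv:repl}. Your write-up fills in details the paper's sketch omits (the conflict-graph acyclicity argument, the real-time precedence argument for the \emph{strict} part, and the flagged need for deterministic aborts across replicas under speculative execution), but the underlying decomposition and use of the invariants are identical.
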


\begin{proof}
It follows from the three invariants stated above that the \quecc{} processes transactions with strict serializability.  Invariant \ref{inv:commit} ensures that the commitment order of conflicting transactions follows the planning order. Planning threads impose ordering by using the ordering property of queues in EQs. The order between EQs planned by different planning threads is determined by the priority order of the planning threads. Thus, there is a global partial order of all transaction fragments, which is preserved by Invariants \ref{inv:exec} and \ref{inv:commit}. Furthermore, because the commitment stage does not start until the replication requests are acknowledged according to Invariant \ref{inv:repl}, the transaction commitment is safe.  
\end{proof}


\subsection{Latency Model}\label{sec:latency-model}

\begin{figure}
	\centering
	\includegraphics[width=\plotwidth{0.98}{0.7}\linewidth,trim=0cm 0cm 0cm 0cm,clip]{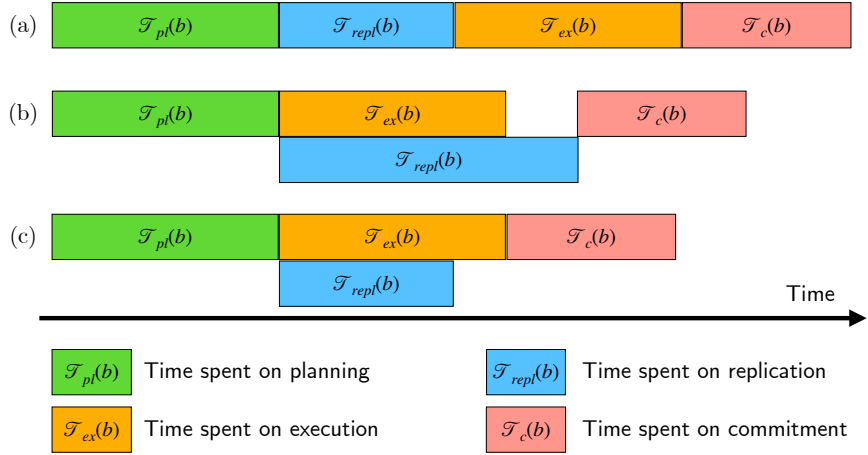}
	\caption[]{Illustrating Replication Overhead in QR-Store}
	\label{fig:block-timeline}
	\vspace{-4mm}
\end{figure}

In this section, we model the latency for our queue-oriented transaction processing with replication. The key idea of performing speculative replication is to hide the replication latency overhead. The following equation models the latency of completing the processing of a single batch which is denoted as $\mathcal{T}_b$.

\begin{equation}
	\mathscr{T}(b) = \mathscr{T}_{pl}(b) +  max(\mathscr{T}_{deliv.} + \mathscr{T}_{ex}(b), \mathscr{T}_{repl}(b)) + \mathscr{T}_{c}(b)
\end{equation}

\begin{itemize}
	\item $\mathscr{T}_{pl}(b)$ is the time spent in the planning phase for batch $b$
	\item $\mathscr{T}_{deliv.}$ is the time spent on delivering remote EQ messages for batch $b$
	\item $\mathscr{T}_{ex}$ is the time spent in the execution stage for batch $b$
	\item $\mathscr{T}_{repl}$ is the time spent waiting for the replication to be confirmed for batch $b$ by the replication layer
	\item $\mathscr{T}_{c}$ is the time spent on committing transactions for batch $b$
\end{itemize}

In Figure \ref{fig:block-timeline}, we show a visualization of three cases of queue-oriented replication. (a) depicts using synchronous replication in \quecc{}. In this case, the replication request must be acknowledged before we start the execution phase. Thus, the overhead of replication is directly added to the latency of processing a batch in \quecc{}. (b) and (c) in Figure \ref{fig:block-timeline} are the two cases of using the speculative replication approach. In (b), the replication takes longer than the execution, which forces the execution threads to wait for the replication confirmation before starting the commit stage. The optimal case is depicted by (c), which totally hides the replication latency, while in (b), the replication overhead is partially hidden.



\subsection{Logging and Recovery}

All proposed DTP systems assume a deterministic stored procedure transaction model (e.g., 
\cite{kallman_h-store_2008,thomson_calvin_2012, qadah_quecc:_2018, qadah_queue-oriented_2019, qadah_q-store_2020}). 
Furthermore, the stored procedure model assumes that the transaction logic is deterministic. In other words, the output is the same as long as the procedure is given the same input. 

DTP systems use a combination of checkpointing and command-logging to facilitate logging and recovery. With command-logging, only the input of the transactions is logged, and on recovery, the log is applied from the latest stable checkpoint. Checkpointing can be done asynchronously to the disk to avoid blocking the transaction processing workflow.  

In \quecc{}, when a leader node crashes and resumes operation, the first step is to determine the new leader. The next step is to request all EQs from the current leader since the last stable checkpoint and execute them to recover the database partition state. After that, it acts as a follower by getting its plans from the replication layer. 

Follower nodes detect leader nodes crashes via heartbeat messages exchanged periodically between leaders and followers. When followers detect that a leader has failed, they run a leader election process among them. Once a new leader is elected, the newly elected leader node requests leadership ACKs from other followers to determine the last committed batch. It requests any missing queues and replays them. 

Using command-logging only with \quecc{} introduces a recovery challenge. First, logged commands need to be planned again, which increases the latency of recovery. Second, when recovering multi-partition transactions, participation from all partitions is required to resolve data dependencies.

Therefore, instead of using command-logging and simply log transaction inputs, \quecc{} create special write-only EQs that contain the last write operation of records accessed by planned EQs. These write-only EQs are logged to facilitate recovery. Thus, to recover a node's state, it is sufficient to request these write-only EQs from other nodes. This approach also resolves data dependencies associated with replaying multi-partition transactions because the logged value does not have any data dependencies.

\section{Implementation}
\label{sec:qrep-impl}
This section discusses the implementation aspects of the replication layer and optimizations related to synchronization granularity.  We show the impact of these implementations and optimizations in Section \ref{sec:qrep-eval}.

\subsection{Replication Layer Abstraction}
As mentioned in Section \ref{sec:genfw} that we provide a simple API abstraction for the replication layer. We now give some details on the API, which consists of the following two functions.

\textproc{replicateData}\verb|(meta_data, data, [callback])|.   
This is an asynchronous function that is called by the leader set of nodes (the dotted arrow in Figure \ref{fig:general-fw} originating from the red circles). \verb|meta_data| parameter can have some identifying information of the data being replicated (e.g., a batch identifier). The \verb|data| parameter is a byte string of the data being replicated. \verb|callback| parameter is a function that is called after the \textproc{replicateData} function completes. Since the function is asynchronous, the callback function provides a way to perform some actions (e.g., error handling).

\textproc{receiveData}\verb|(meta_data, callback)|.  
This is also an asynchronous function, and it is called by replicas to receive the replicated data. (the solid arrow in Figure \ref{fig:general-fw} originating from the green replication layer) The \verb|meta_data| parameter can have some identifying information about the replicated data from the replica's perspective. For example, it can include the expected batch identifier. The \verb|callback| parameter is a function that is called with and passed the replicated data. It is used to construct the DTP system's representation of the replicated data from byte string passed to \textproc{replicateData}.

%

\subsection{Replication Layer Implementations}
Our prototype provides two implementations of the replication layer, and we describe these implementations in this section. 

\textbf{Middleware Replication} The first one uses Zookeeper \cite{apache_zookeeper_website} as a middleware to implement the replication Layer. Leader servers and replica servers act as clients to the Zookeeper cluster. While leader servers make write requests, replica servers make read requests to get the replicated data. The Zookeeper cluster is a highly available system, and it does not constitute a single point of failure because it uses its internal replication and consensus protocols to ensure correct fail-over. The consensus protocol used by Zookeeper is called \zab{}\cite{junqueira_zab_2011} The advantage of this approach is that it simplifies the replication layer shim implementations at the server nodes. The disadvantages include adding an overhead of the middleware to the transaction processing protocol. This approach is adopted by \calvin{} in their original proposal \cite{thomson_calvin_2012} and also in our experiments.  

\textbf{Integrated Replication}
The second implementation of the replication layer integrates a quorum-based replication protocol with the transaction processing protocol and effectively eliminates the middleware overhead. Our implementation is based on \raft{}\cite{ongaro_search_2014} and Viewstamped replication \cite{oki_viewstamped_1988}. With this implementation, the leader server nodes send replicated data messages to replicas. On receiving replicated data messages, replicas reply with acknowledgment messages to leader servers. Depending on the number of node failures to tolerate, there is a minimum number of acknowledgment messages that confirm a successful replication. Let the number of node failures be denoted as $f$ such that the total number replica for a given node is $n = 2f+1$.  The number of acknowledgment messages is $f+1$.


\subsection{Synchronization Granularity}
In \quecc{}, we support multiple granularities of transaction processing synchronization within a cluster. A coarse-grained synchronization puts node-level barriers between batches. Hence, before any thread can start processing the next batch, it has to wait for all other nodes to finish processing the current batch. At the cost of additional implementation complexity, it is possible to have a more fine-grained synchronization at the thread level. This thread-level synchronization allows worker threads to start the planning phase of the next batch as soon as other nodes acknowledge the execution of their respective planned EQs. Thread-level synchronization improves the concurrency of the phases across batches. Our previous prototype implementation for \qstore{} \cite{qadah_q-store_2020} uses node-level synchronization while our current prototype uses thread-level.


%


\section{Evaluation}
\label{sec:qrep-eval}

In this section, we present our experimental evaluation. We implement the three case studies that we discussed earlier in Section \ref{sec:genfw}. We use the first replication point for \calvin{}, and use the second replication point for \quecc{} and a fully replicated version of \qcmw{} (denoted as \qsonenode{}). The experimental study of the other replication points admitted by our proposed framework in Section \ref{sec:genfw} remains future work. 

We mainly focus on \quecc{} with various replication factors configured. The current prototype of \quecc{} is the optimized and improved version of our previous work presented in \cite{qadah_q-store_2020}. 
In our comparison with \calvin{}, we use Zookeeper as the implementation of the replication layer as it is originally presented in \cite{thomson_calvin_2012}.

\subsection{Experimental Setup}
We use up to $64$ \verb|c2-standard-8| instances on Google Cloud Platform to run our experiments. These instances have $8$ vCPUs, $32$GiB of RAM, and the default egress network bandwidth available is $16$Gbps. Each node runs Ubuntu 18.04 (bionic beaver), and the codebase is compiled with the $-O3$ compiler optimization flag. 
We use four worker threads and four communication threads. Threads are pinned to cores to minimize potential variance due to the operating system. 

Furthermore, our \calvin{}'s implementation dedicates one worker thread for the sequencer role and another worker thread for the scheduler role. This configuration leaves two threads for processing transactions \footnote{using additional worker threads reduces Calvin's performance by nearly $50\%$ according to our observations}.

Each data point is the average of three trials. Each trial consists of a warm-up phase of $60$ seconds where measurements are not collected, followed by a measurement phase of $60$ seconds.

\begin{table}[t]
	\centering 
	\caption{System and workload configuration parameters. Default values are in parenthesis. Default values are used unless stated otherwise.}  
	\label{tab:ycsb-wl-conf-qrep}
	\resizebox{\plotwidth{0.98}{0.7}\columnwidth}{!}{
		\begin{tabular}{ l l l }
			\hline
			P\# & \textbf{Parameter Name} & \textbf{Possible Parameter Values} \\ 
			$P1$ & $\%$ of multi-partition txns. & $0\%, 10\%, 15\%, (50\%), 75\%, 100\% $ 		
			\\ 
			$P2$ &Zipfian's theta & $(0.0),0.4,0.6,0.8,0.9,0.99$ \\ 
			
			
			$P3$ & Operations/txn. & $2,4,8,12,(16)$ \\ 
			
			
			$P4$ & Batch sizes
			& $2K, 5K, 10K, (20K), 40K, 80K, 100K$ \\
			
			$P5$ & Server nodes counts & $2,4,8,(16)$ \\ 
			
			$P6$ & Replication factor & $(0),(1),(2),4,6,8$ \\ 
			\hline
			
		\end{tabular}
	}
\end{table}

\textbf{System and workload parameters}
In Table \ref{tab:ycsb-wl-conf-qrep}, we present all system and workload parameters used in our experiments. $P1$ is the percentage of multi-partition transactions (MPTs) in the workload. An MPT accesses more than one partition and requires a distributed commit protocol. $P2$ is the parameter that controls the access distribution of transactions. Higher values of $\theta$ make the access skewed to a small set of records. 
$P3$ is the parameter that controls the number of operations per transaction and requires transaction atomicity when there is more than one operation in a transaction. A transaction with a single operation is atomic by definition. 
$P4$ is the size of transaction batches that are processed by queue-oriented transaction processing systems such as \quecc{}, \qstore{} and \qcmw{}. $P5$ controls the number of server nodes used in a cluster. A cluster of nodes forms an instance of the database, and each node manages a single partition. By default, we use one client node per server node. $P6$ represents the number of replicas used per cluster of servers. For example, a value of $2$ means that there are $2$ additional replicated database instances.

\subsection{Experimental Results}

We first study the impact of replication. We use three configurations. As a baseline, we use \qstore{} which does not perform replication. \qsone{} and \qstwo{} has a replication factor of $1$ and $2$, respectively. 

\begin{figure}[!t]
	\includegraphics[width=\plotwidth{0.9}{0.6}\columnwidth]{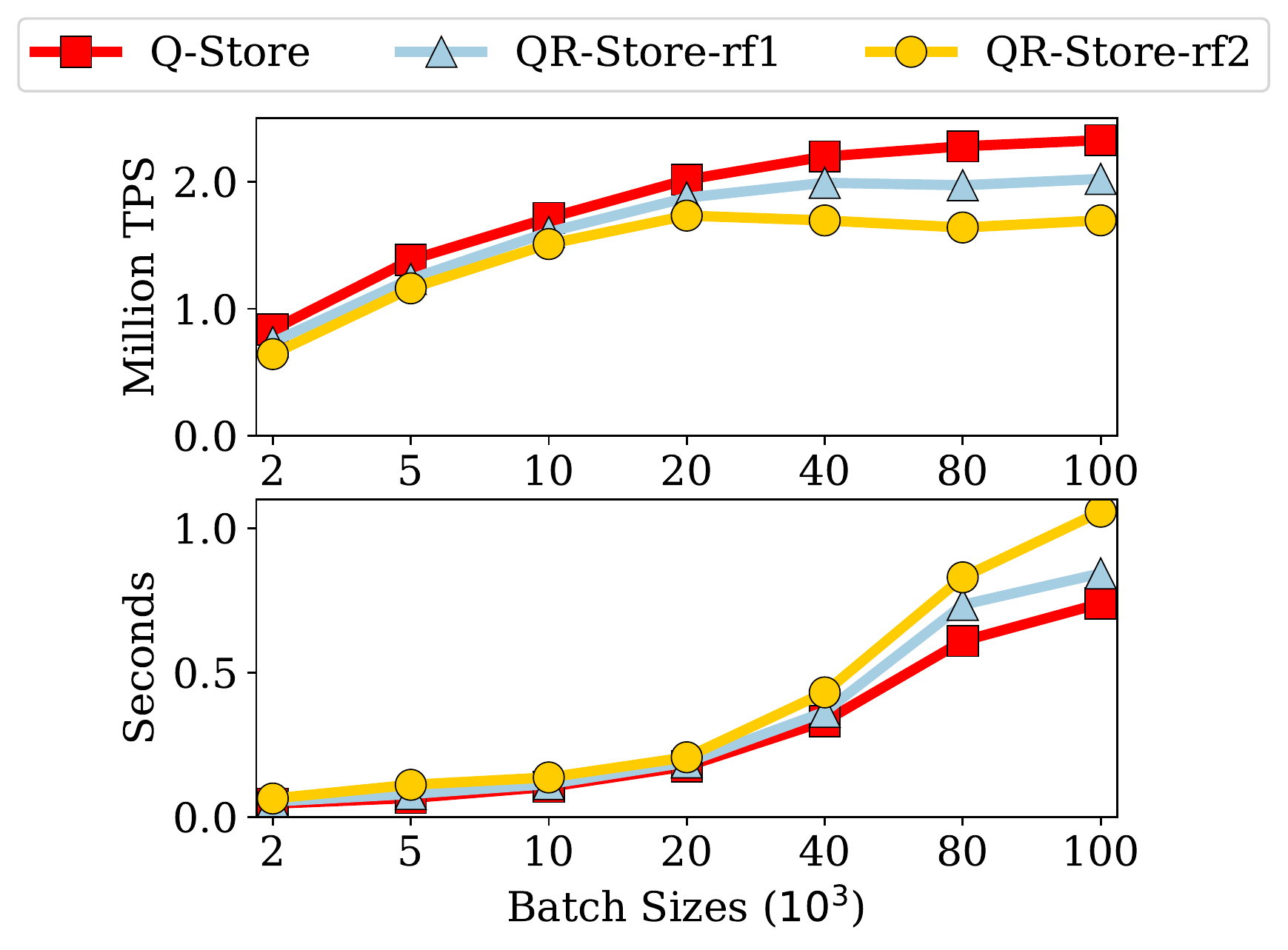}
	\centering
	\vspace{-3mm}
	\caption{Varying Batch Size}
	\label{fig:eval-bsizevar-qrep}
	\vspace{-4mm}
\end{figure} 

\textbf{Varying the batch size} 
In this set of experiments, we want to understand the effect of the batch size on the performance of \quecc{}. We use $50\%$ MPT transactions in the workload, uniform access distribution (i.e., a value of $\theta=0.0$), $50\%$ update operation per transaction, $16$ operation per transaction, and each transaction access $8$ partitions. The number of server nodes is $16$. 

Figure \ref{fig:eval-bsizevar-qrep}, shows the system throughput and the $99^{th}$ percentile latency of transaction processing. \qszero{} which is the configuration without replication performs the best both in terms of throughput and latency (up to $2.3$ million TPS executed in under $750$ milliseconds). While \qszero{}'s throughput performance keeps increasing with batch sizes greater than $20K$ transactions, the latency also increases significantly. With larger batches, worker threads spend more time planning and executing transactions. Also, the size of messages exchanged between the server nodes within the cluster becomes larger. Moreover, beyond the $20K$ batch size, the gap in performance \qszero{}, and the replicated configurations (i.e., \qsone{} and \qstwo{}) becomes more significant because the leader cluster needs to prepare and replicate larger plans. With \qstwo{} the number of messages that are sent by the leader nodes is twice the number sent by the \qsone{} configuration. Hence, as these messages become larger, the computation and communication requirements increase. For example, for $rf=2$ the $99^{th}$ percentile latency increases from $30\%$ at $40K$ batches to $43\%$ at $100K$. Notably, at $20K$ batches, the latency overhead is only $16\%$ in this workload configuration.



\begin{figure}[!t]
	\includegraphics[width=\plotwidth{0.82}{0.6}\columnwidth]{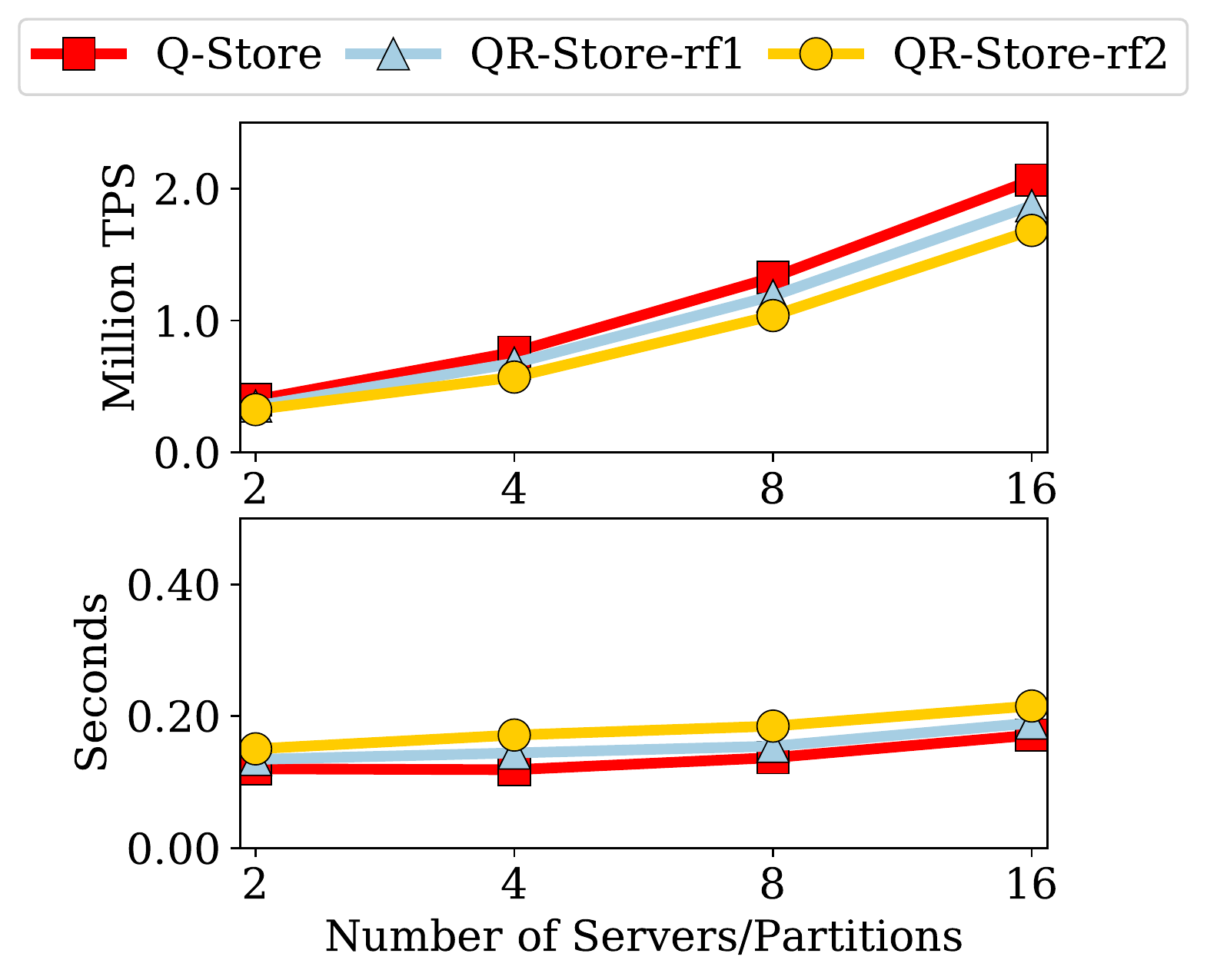}
	\centering
	\vspace{-3mm}
	\caption{Scalability with increasing the number of servers/partitions in a cluster instance.}
	\label{fig:eval-ncvar-qrep}
	\vspace{-4mm}
\end{figure}


\textbf{Scalability when increasing number of server nodes} 
In this set of experiments, the percentage of MPT is $50\%$, the Zipfian theta parameter is set to $\theta = 0.0$, the percentage of write operations is $50\%$, the number of operations per transaction is $16$, the batch size is set to $20K$ transactions, and we force each transaction to access all available partitions. In other words, MPTs will always access all servers. We vary the number of server nodes from 2 to 16.  

Figure \ref{fig:eval-ncvar-qrep} shows that all configurations scale linearly as we add more nodes into the server cluster. The linear scaling is because operations in each transaction are processed in parallel by all available nodes. Notably, the throughput performance reaches $2$, $1.8$, and $1.7$ million transactions per second for \qszero{}, \qsone{}, and \qstwo{}, respectively. The $99^{th}$ percentile latency remains under $216$ milliseconds.  

When using higher replication factors (i.e., \qsone{}, and \qstwo{}), the impact of replication becomes larger. In our experiments with $16$ server nodes, the performance drops by $15\%$. We believe that this is a reasonable cost to ensure fault tolerance. 




\begin{figure}[!t]
	\includegraphics[width=\plotwidth{0.82}{0.6}\columnwidth]{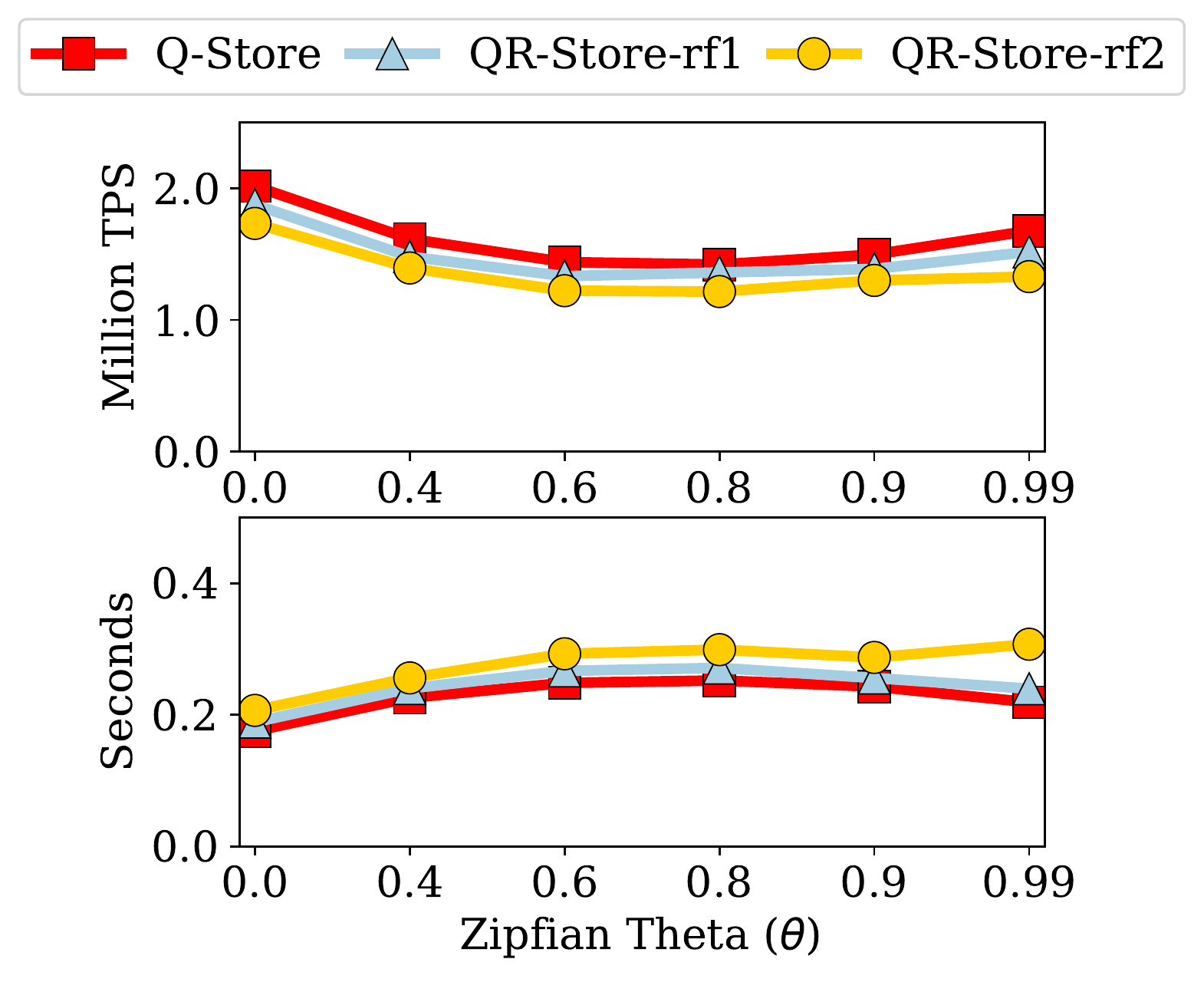}
	\centering
	\vspace{-3mm}
	\caption{Varying Data Access Contention}
	\label{fig:eval-thetavar-qrep}
	\vspace{-4mm}
\end{figure}

\textbf{Varying Data Access Contention} 
In Figure \ref{fig:eval-thetavar-qrep}, we vary the value of the Zipfian distribution $\theta$ from $0.0$ to $0.99$. Using $0.0$ for $\theta$ creates uniform access to database records, while a value $0.99$ creates extremely skewed access resulting in an increased contention on database records. 

Our queue-oriented approach naturalizes the high contention because the operations accessing the same set of records are placed in the same EQ and are executed by the same worker thread. However, as shown in Figure \ref{fig:eval-thetavar-qrep}, we observe a decrease in performance when there is medium contention (i.e., $0.4-0.8$). For \qszero{}, the throughput drops by $19-29\%$, and the latency increases by $28-39\%$. The throughput drops by $19-26\%$ and $17-26\%$, and the latency increases by $23-42\%$ and $26-42\%$ for \qsone{} and \qstwo{}, respectively. At medium contention, some EQs contain more operations than others, which increases the execution time. However, at high contention (i.e., $0.9-0.99$), the performance gets better because the caching becomes more effective as most operations in the large queues access a small set of records. Notably, at low contention (i.e., $\theta = 0.0$), the overhead of replication is $6\%$ and $13\%$ for \qsone{} and \qstwo{}, respectively.

\begin{figure}[!t]
	\includegraphics[width=\plotwidth{0.82}{0.6}\columnwidth]{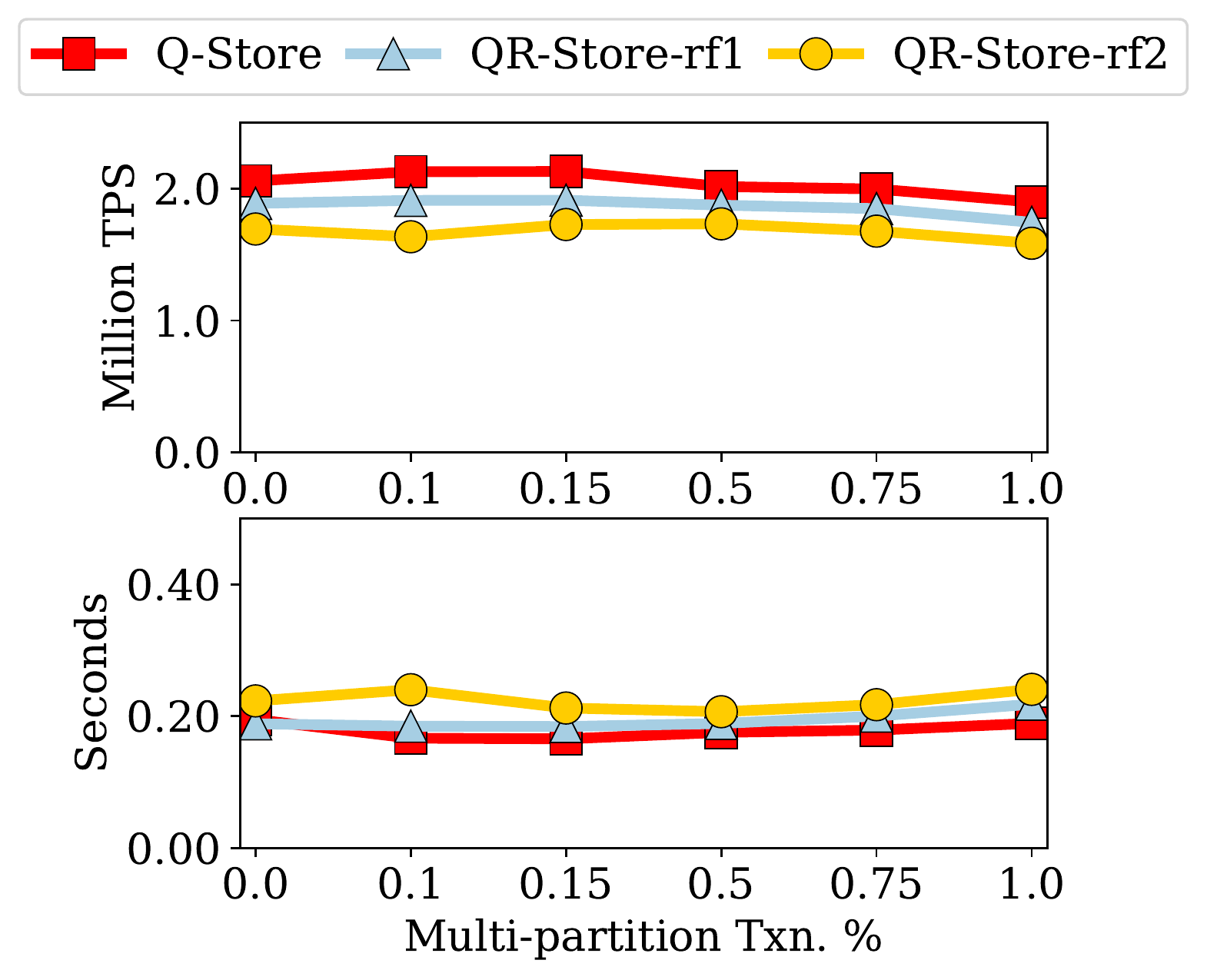}
	\centering
	\vspace{-3mm}
	\caption{Varying Multi-partition transaction rate}
	\label{fig:eval-mptvar-qrep}
\end{figure} 

\textbf{Varying MPT percentage} 
Now, we look into the effect of increasing the percentage of the multi-partition transactions in the workload. For this set of experiments, we use uniform access and enforce each MPT to access $8$ partitions of the $16$ partitions.  The update operation percentage remains at $50\%$. Increasing the MPT percentage increases the sizes of remote EQs that are planned. The throughput performance gets better at a low ratio of MPT in the workload by $6.5\%, 10.5\%$ and $24\%$ for \qszero{}, \qsone{}, and \qstwo{}, respectively, because some operations are executed remotely by other nodes which reduced the load on local worker threads. However, as we increase the ratio, the performance starts dropping to even below the performance of a pure single partition workload because remote executions take longer times to be acknowledged.

\begin{figure}[!t]
	\includegraphics[width=\plotwidth{0.82}{0.6}\columnwidth]{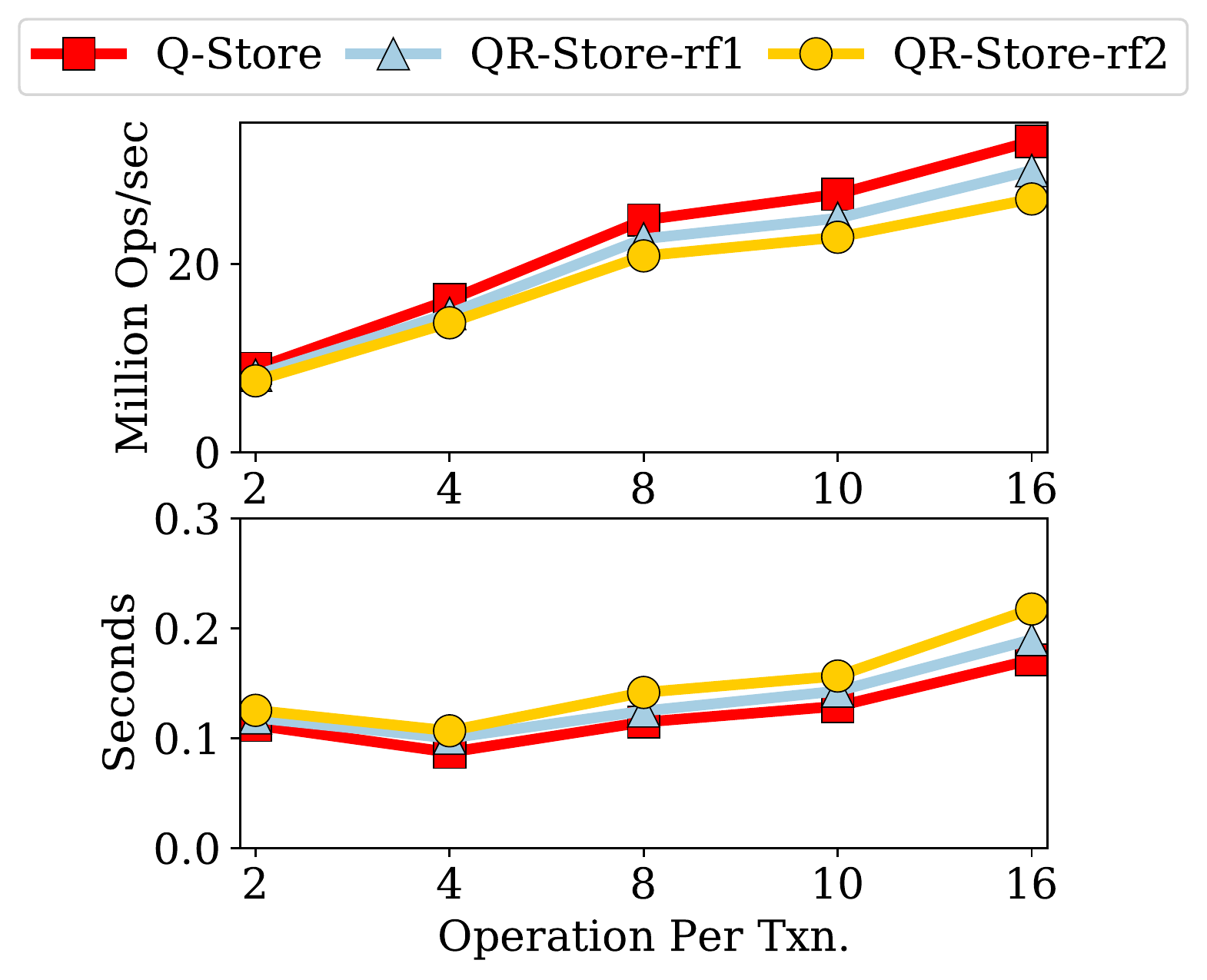}
	\centering
	\vspace{-3mm}
	\caption{Varying operation per transaction}
	\label{fig:eval-optvar-qrep}
	\vspace{-4mm}
\end{figure} 

\textbf{Varying the number of operations per transaction} 
The number of operations per transaction represents the transaction size. Again, we fix the other system and workload parameters to their default values and vary the number of operations per transaction. In Figure \ref{fig:eval-optvar-qrep}, we use the number of operations processed per second instead of the number of operations. All configurations scale their throughput performance linearly as we increase the number of operations, and the throughput performance reaches up to $33, 30$, and $27$ million operations per second for \qszero{}, \qsone{} and \qstwo{}, respectively. The $99^{th}$ percentile latency is between $110-220$ milliseconds. With a low number of operations per transaction, the communication overhead is more significant. As the number of operations per transaction increases, the EQs become larger and more efficient to execute. With replication the throughput performance drops by $9\%$ and $18\%$ for \qsone{} and \qstwo{}, respectively.
 
%
%
%
%

\begin{figure}[!t]
	\includegraphics[width=\plotwidth{0.82}{0.6}\columnwidth]{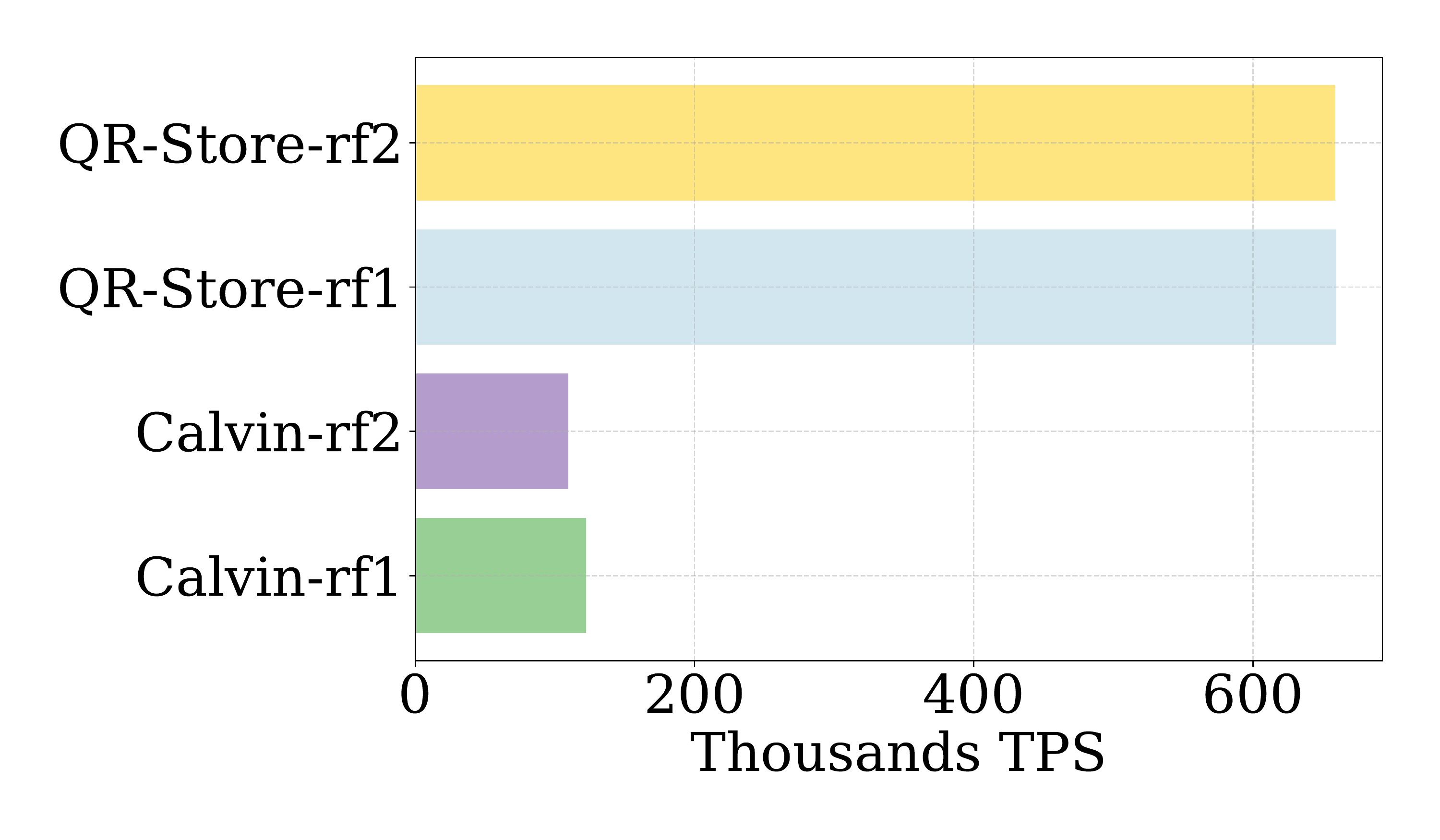}
	\centering
	\vspace{-3mm}
	\caption{Comparison with Calvin}
	\label{fig:eval-calvin}
	\vspace{-4mm}
\end{figure} 

\textbf{Comparison with Calvin}
In this set of experiments, we want compare \quecc{}'s performance to \calvin{}'s \cite{thomson_calvin_2012}. We implemented \calvin{}'s approach to replication which uses Paxos via Zookeeper. \quecc{}'s approach uses an integrated replication protocol (Section \ref{sec:qrep-impl}). Hence, \quecc{}'s implementation of the replication layer eliminates the overhead of a replication middleware. We use four server nodes per cluster and enable compression for \calvin{} replicated messages. We use a highly skewed workload for the workloads with $\theta = 0.9$, $10$ operations per transaction, $50\%$ MPT, update ratios, and force each transaction to access two partitions. 

As shown in Figure \ref{fig:eval-calvin}, \quecc{}'s configurations (denoted as \qsone{} and \qstwo{}) outperforms \calvin{}'s configurations (denoted as \calvinone{} and \calvintwo{}) by up to $6\times$. Multiple factors are contributing to this improvement. The first one is the use of the queue-oriented speculative transaction processing model, which is more efficient than \calvin{}'s transaction execution model. The second factor is the use of the integrated replication implementation as opposed to Zookeeper, which introduces significant replication processing overhead. Using the integrated approach, \quecc{} introduced no more than $8\%$ performance overhead compared to the \qstore{} within a four-node cluster configuration.

%


\begin{figure}[!t]
	\includegraphics[width=\plotwidth{0.82}{0.6}\columnwidth]{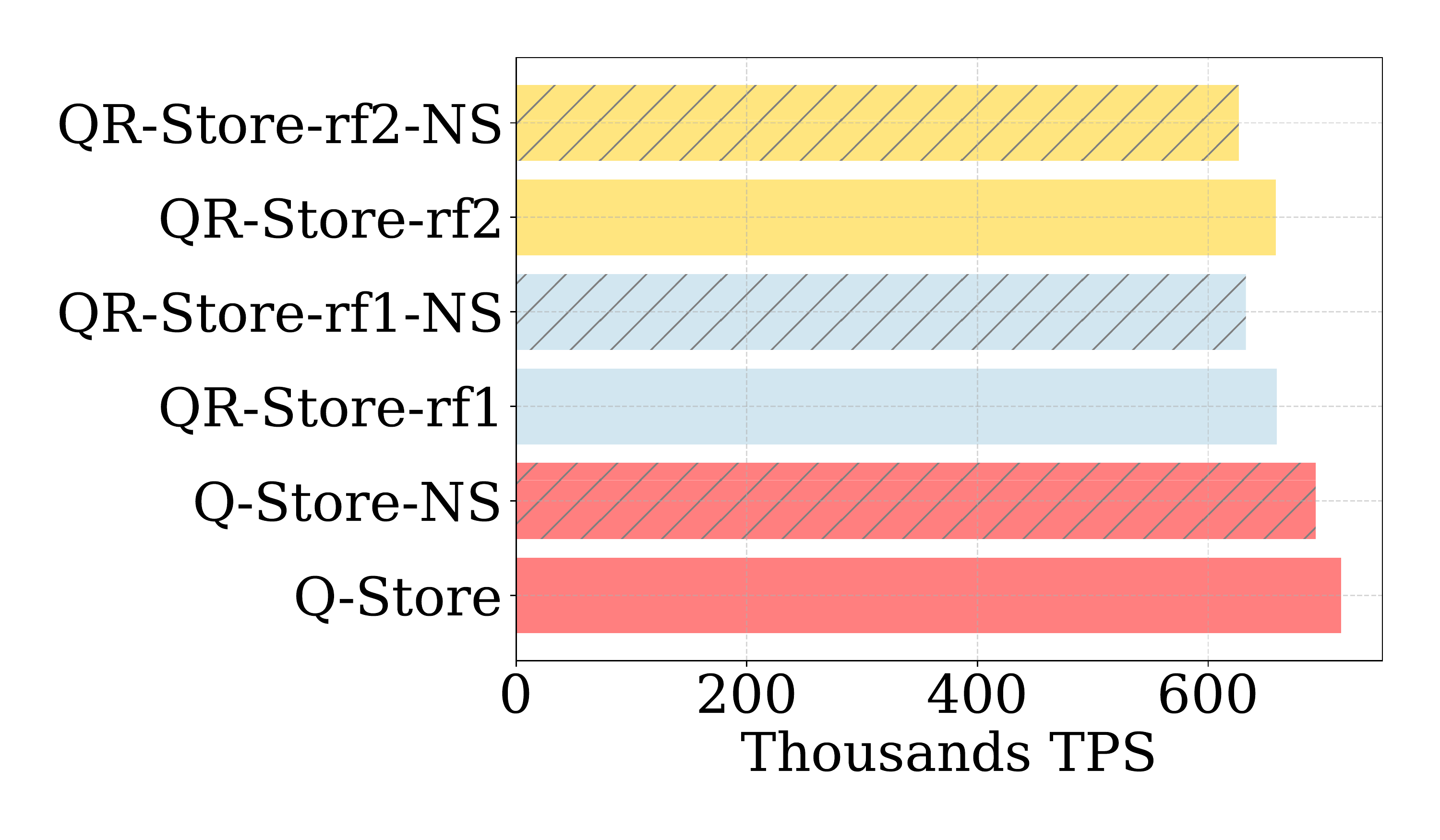}
	\centering
	\vspace{-3mm}
	\caption{Node granularity vs. fine granularity synchronization}
	\label{fig:eval-nodesync}
	\vspace{-4mm}
\end{figure} 

\textbf{Impact of node-granularity synchronization}
One of the key optimizations that we introduced in our current prototype is the granularity of synchronization in \quecc{}. In our previous work \cite{qadah_q-store_2020}, we adopted a node-level synchronization protocol that runs after processing a batch, which synchronizes all worker threads before they start working on the next batch. Using this approach simplified our prototype implementation and allowed us to avoid locking shared data structures. However, it also introduced unnecessary idle time periods where worker threads can perform useful work for the next batch. 

In our current prototype, we designed and implemented a fine-grained synchronization protocol that increases the concurrency of planning and execution phases. With our queue-oriented transaction processing paradigm, a node-level partition is further partitioned by planning threads. Instead of waiting for every other node in the cluster before starting its planning phase, it starts the planning phase, and it only waits for \ackmsg{} messages for the remote EQs that it planned before delivering the EQs for the new batch. Hence, this approach effectively implements a fine-grained synchronization protocol at the thread level instead of the node level.   

Figure \ref{fig:eval-nodesync}, shows a comparison of node-level synchronization and thread-level synchronization with various configurations of \quecc{}. We use four server nodes per cluster and the default workload parameters. The configurations that use node level synchronization are denoted with {\small \sf -NS} suffix. The thread-level synchronization technique provides up to $5\%$ improvement in throughput performance and up to $14\%$ in latency reduction.

\begin{figure}[!t]
	\includegraphics[width=\plotwidth{0.82}{0.6}\columnwidth]{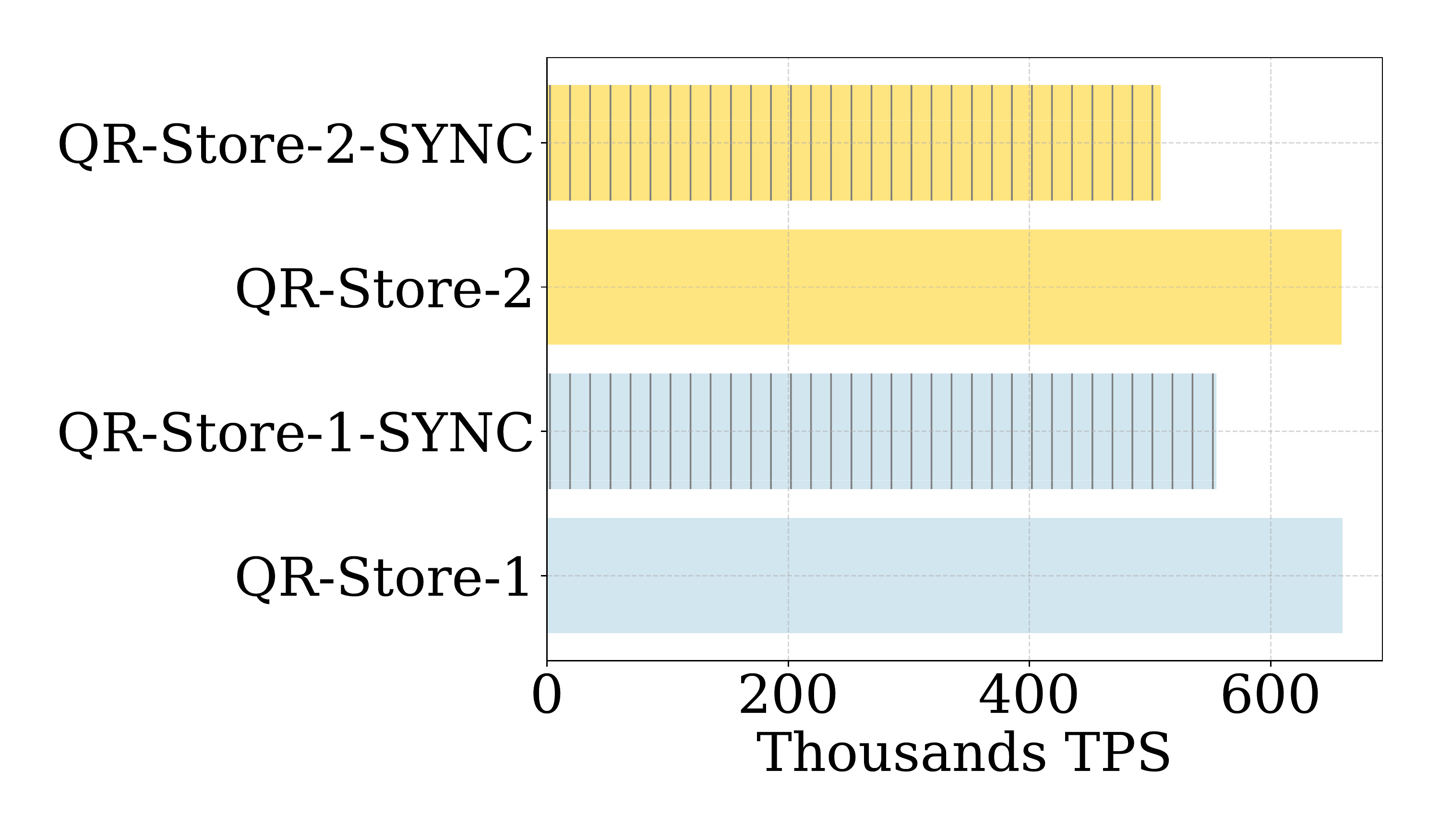}
	\centering
	\vspace{-3mm}
	\caption{Speculative replication versus synchronous replication}
	\label{fig:eval-svs}
	\vspace{-4mm}
\end{figure}

\textbf{Speculative replication vs. synchronous} 
Another key technique in \quecc{} is the concept of speculative replication. The basic idea is that instead of waiting for the replication to complete before starting the execution phase of a batch, \quecc{} speculates that the replication is expected to succeed and starts the execution phase without waiting. However, before starting the commit stage, the system waits for {\em acknowledgments} confirming the success of the replication requests. In Figure \ref{fig:eval-svs}, we show a comparison between the two techniques (the synchronous configuration is denoted with a {\small \sf -SYNC} suffix). The speculative replication technique improves the performance by up to $30\%$ with a four-node cluster and the default workload parameters. 

\begin{figure}[!t]
	\includegraphics[width=\plotwidth{0.7}{0.45}\columnwidth]{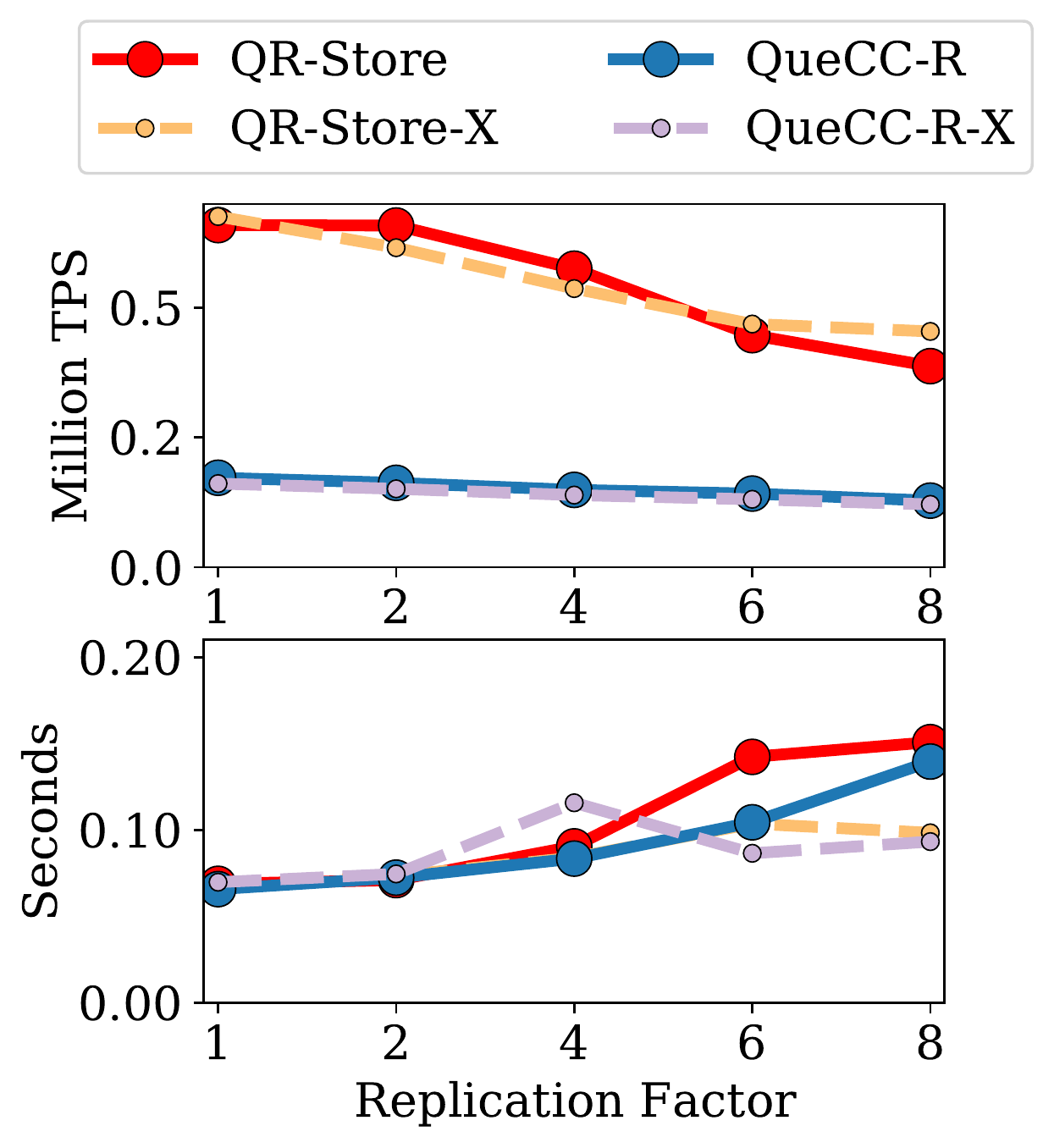}
	\centering
	\vspace{-3mm}
	\caption{Impact of the replication factor and replication compression. Dashed lines represent configurations with replication compression enabled. Four nodes per cluster for \quecc{}}
	\label{fig:eval-rfvar}
	\vspace{-4mm}
\end{figure} 

\textbf{Impact of the replication factor} 
The replication factor dictates the number of replicas of the database instance. The leader set of server nodes perform a proportional amount of work to the number of configured replicas. We perform a set of experiments involving four nodes per cluster and a fully replicated configuration. The fully replicated configuration implements the third case study described in Section \ref{sec:case-studies-qc}. In Figure \ref{fig:eval-rfvar}, \qsonenode{} is a non-partitioned (i.e., single-node database instance) and replicated configuration of \quecc{} while \qsfournode{} is the partitioned and replicated configuration with four-node per cluster. We increase the replication factor from $1$ up to $8$. As we can see, the overhead of replication beyond a replication factor of four becomes significant. It reduces the performance by up to $41\%$ and $26\%$ as we increase the replication factor to $8$ for \qsfournode{} and \qsonenode{}, respectively. The large drop is due to the increased demand for network resources as the number of replication messages increases proportionally to the replication factor.

\textbf{Impact of using compression for replication} 
Intuitively, using compression reduces the number of bytes that go over the network for replication messages by the leader set of nodes in both \quecc{} and \qsonenode{}. However, it increases the CPU computation requirements on the leader set of nodes as more CPU cycles are needed to perform the compression. Hence, compression is not a silver bullet and is not always beneficial. We conduct experiments to demonstrate that. Figure \ref{fig:eval-rfvar} shows that compression can be beneficial when the replication factor is high (e.g., at $6$ or $8$). The $99^{th}$ percentile latency improves by up to $35\%$. Compression is also beneficial for \calvin{} because it uses Zookpeer as the replication layer. In our experiments, \calvin{} latency improved by up to $53\%$. This result agrees with our micro-benchmark result shown in Figure \ref{fig:zklatency}. However, it increases the work on the leader nodes in all other cases, which negates its benefits.



%
%
%
%

\section{Related Work}

In this section, we discuss relevant work from the literature. In this paper, we addressed an important challenge of high-performance replication in distributed deterministic transaction processing systems. However, the database replication techniques have been studied since several decades ago \cite{gray_dangers_1996}. These techniques are mainly studied with respect to two dimensions. The first dimension is whether to allow transactions to update at any replica or designate a primary copy replica. The second dimension is when to synchronize replicas and whether we should do that synchronization eagerly or lazily. Furthermore, traditional database replication techniques reuse non-deterministic transaction processing protocols \cite{sadoghi_transaction_2019} e.g., 2PL, 2PC \cite{gray_transaction_1992} and OCC \cite{kung_optimistic_1981}. The reader is referred to existing literature that cover the traditional database replication techniques very well (e.g., \cite{kemme_dont_2000,kemme_new_2000, kemme_database_2010, kemme_database_2010-1, ozsu_principles_2020}) for more details. 
Compared to traditional database replication techniques, the techniques proposed in this paper are deterministic, speculative, and adopt the queue-oriented transaction processing paradigm \cite{qadah_queue-oriented_2019}. Hence, this paper explores a new research territory. 

\textbf{Replication Frameworks} Wiesmann et al. \cite{wiesmann_understanding_2000} proposed a general replication framework to study replication techniques developed by the database systems research community and the distributed systems research community. However, it does not address the design choices made by DTP systems. Our proposed general framework complements their replication framework by focusing on replications in DTP systems. 

\textbf{Deterministic transaction processing protocols} 
Deterministic transaction processing approaches are shown to process transactions more efficiently when compared to non-deterministic approaches. Recently, there have been many proposals for deterministic transaction processing protocols. These systems can be {\em centralized} (e.g., \cite{qadah_quecc:_2018, faleiro_high_2017, faleiro_lazy_2014, yao_exploiting_2016, faleiro_rethinking_2015}) or {\em distributed} (e.g., \cite{kallman_h-store_2008, thomson_calvin_2012, qadah_q-store_2020, lu_aria_2020}). The main focus of these proposals is on the concurrency control aspects of processing transactions. In contrast, this paper goes beyond existing deterministic transaction processing systems by addressing the replication challenge in a systematic way and build on queue-oriented principles, which allows efficient transaction processing and database replication. 

\textbf{Replication and Consensus Protocols}
In the distributed systems community, consensus and replication have received a great deal of attention (e.g., \cite{oki_viewstamped_1988, lamport_part-time_1998,lamport_paxos_2002,ongaro_search_2014}). Such work focused on state-machine replication and aimed to achieve linearizability. In contrast, this paper is concerned with strict serializability guarantees of transaction processing on distributed, partitioned, and replicated databases. 


\section{Conclusion}
In this paper, we propose a generalized framework for designing replication schemes for distributed DTP systems. Using the framework, we study three cases from the literature and discuss how replication can be reasoned about. We propose a novel queue-oriented speculative replication technique and describe how it is supported in \quecc{}. Finally, we perform an extensive evaluation of several configurations of \quecc{} and demonstrate efficient replicated transaction processing that can reach up to $1.9$ million replicated transactions per second in under $200$ milliseconds and a replication overhead of $8\%-25\%$ compared to non-replicated configurations.

\printbibliography

@article{abadi_overview_2018,
  title = {An {{Overview}} of {{Deterministic Database Systems}}},
  author = {Abadi, Daniel J. and Faleiro, Jose M.},
  date = {2018-08},
  journaltitle = {Commun. ACM},
  volume = {61},
  pages = {78--88},
  issn = {0001-0782},
  doi = {10.1145/3181853},
  url = {http://doi.acm.org/10.1145/3181853},
  urldate = {2018-10-18},
  abstract = {Deterministic database systems show great promise, but their deployment may require changes in the way developers interact with the database.},
  file = {/Users/qadahtm/Zotero/storage/V96KH7FH/Abadi and Faleiro - 2018 - An Overview of Deterministic Database Systems.pdf},
  number = {9}
}

@online{apache_kafka_website,
  title = {Apache {{Kafka}}},
  url = {https://kafka.apache.org/},
  urldate = {2021-05-05},
  abstract = {Apache Kafka: A Distributed Streaming Platform.},
  file = {/Users/qadahtm/Zotero/storage/ET9URK9Q/kafka.apache.org.html},
  langid = {english},
  organization = {{Apache Kafka}}
}

@online{apache_zookeeper_website,
  title = {Apache {{ZooKeeper}}},
  url = {https://zookeeper.apache.org/},
  urldate = {2021-05-05},
  file = {/Users/qadahtm/Zotero/storage/2556PEWI/zookeeper.apache.org.html}
}

@online{etcd_website,
  title = {Etcd},
  url = {https://etcd.io/},
  urldate = {2021-05-05},
  abstract = {A distributed, reliable key-value store for the most critical data of a distributed system},
  file = {/Users/qadahtm/Zotero/storage/4QG2HM2P/etcd.io.html},
  langid = {english},
  organization = {{etcd}}
}

@article{faleiro_high_2017,
  title = {High {{Performance Transactions}} via {{Early Write Visibility}}},
  author = {Faleiro, Jose M. and Abadi, Daniel J. and Hellerstein, Joseph M.},
  date = {2017-01},
  journaltitle = {Proc. VLDB Endow.},
  volume = {10},
  pages = {613--624},
  issn = {2150-8097},
  doi = {10.14778/3055540.3055553},
  url = {http://dx.doi.org/10.14778/3055540.3055553},
  abstract = {In order to guarantee recoverable transaction execution, database systems permit a transaction's writes to be observable only at the end of its execution. As a consequence, there is generally a delay between the time a transaction performs a write and the time later transactions are permitted to read it. This delayed write visibility can significantly impact the performance of serializable database systems by reducing concurrency among conflicting transactions. This paper makes the observation that delayed write visibility stems from the fact that database systems can arbitrarily abort transactions at any point during their execution. Accordingly, we make the case for database systems which only abort transactions under a restricted set of conditions, thereby enabling a new recoverability mechanism, early write visibility, which safely makes transactions' writes visible prior to the end of their execution. We design a new serializable concurrency control protocol, piece-wise visibility (PWV), with the explicit goal of enabling early write visibility. We evaluate PWV against state-of-the-art serializable protocols and a highly optimized implementation of read committed, and find that PWV can outperform serializable protocols by an order of magnitude and read committed by 3X on high contention workloads.},
  keywords = {oltp,quecc},
  number = {5}
}

@inproceedings{faleiro_lazy_2014,
  title = {Lazy {{Evaluation}} of {{Transactions}} in {{Database Systems}}},
  booktitle = {Proc. {{SIGMOD}}},
  author = {Faleiro, Jose M. and Thomson, Alexander and Abadi, Daniel J.},
  date = {2014},
  pages = {15--26},
  publisher = {{ACM}},
  doi = {10.1145/2588555.2610529},
  url = {http://dx.doi.org/10.1145/2588555.2610529},
  abstract = {Existing database systems employ an \emph{eager} transaction processing scheme—that is, upon receiving a transaction request, the system executes all the operations entailed in running the transaction (which typically includes reading database records, executing user-specified transaction logic, and logging updates and writes) before reporting to the client that the transaction has completed. We introduce a \emph{lazy} transaction execution engine, in which a transaction may be considered durably completed after only partial execution, while the bulk of its operations (notably all reads from the database and all execution of transaction logic) may be deferred until an arbitrary future time, such as when a user attempts to read some element of the transaction's write-set—all without modifying the semantics of the transaction or sacrificing ACID guarantees. Lazy transactions are processed deterministically, so that the final state of the database is guaranteed to be equivalent to what the state would have been had all transactions been executed eagerly. Our prototype of a lazy transaction execution engine improves temporal locality when executing related transactions, reduces peak provisioning requirements by deferring more non-urgent work until off-peak load times, and reduces contention footprint of concurrent transactions. However, we find that certain queries suffer increased latency, and therefore lazy database systems may not be appropriate for read-latency sensitive applications. We introduce a lazy transaction execution engine, in which a transaction may be considered durably completed after only partial execution, while the bulk of its operations (notably all reads from the database and all execution of transaction logic) may be deferred until an arbitrary future time, such as when a user attempts to read some element of the transaction's write-set—all without modifying the semantics of the transaction or sacrificing ACID guarantees. Lazy transactions are processed deterministically, so that the final state of the database is guaranteed to be equivalent to what the state would have been had all transactions been executed eagerly. Our prototype of a lazy transaction execution engine improves temporal locality when executing related transactions, reduces peak provisioning requirements by deferring more non-urgent work until off-peak load times, and reduces contention footprint of concurrent transactions. However, we find that certain queries suffer increased latency, and therefore lazy database systems may not be appropriate for read-latency sensitive applications.},
  isbn = {978-1-4503-2376-5},
  keywords = {dbms}
}

@article{faleiro_rethinking_2015,
  title = {Rethinking {{Serializable Multiversion Concurrency Control}}},
  author = {Faleiro, Jose M. and Abadi, Daniel J.},
  date = {2015-07},
  journaltitle = {Proc. VLDB Endow.},
  volume = {8},
  pages = {1190--1201},
  issn = {2150-8097},
  doi = {10.14778/2809974.2809981},
  url = {http://dx.doi.org/10.14778/2809974.2809981},
  abstract = {Multi-versioned database systems have the potential to significantly increase the amount of concurrency in transaction processing because they can avoid read-write conflicts. Unfortunately, the increase in concurrency usually comes at the cost of transaction serializability. If a database user requests full serializability, modern multi-versioned systems significantly constrain read-write concurrency among conflicting transactions and employ expensive synchronization patterns in their design. In main-memory multi-core settings, these additional constraints are so burdensome that multi-versioned systems are often significantly outperformed by single-version systems. We propose Bohm, a new concurrency control protocol for main-memory multi-versioned database systems. Bohm guarantees serializable execution while ensuring that reads never block writes. In addition, Bohm does not require reads to perform any bookkeeping whatsoever, thereby avoiding the overhead of tracking reads via contended writes to shared memory. This leads to excellent scalability and performance in multi-core settings. Bohm has all the above characteristics without performing validation based concurrency control. Instead, it is pessimistic, and is therefore not prone to excessive aborts in the presence of contention. An experimental evaluation shows that Bohm performs well in both high contention and low contention settings, and is able to dramatically outperform state-of-the-art multi-versioned systems despite maintaining the full set of serializability guarantees.},
  keywords = {kv-oltp},
  number = {11}
}

@article{gray_dangers_1996,
  title = {The {{Dangers}} of {{Replication}} and a {{Solution}}},
  author = {Gray, Jim and Helland, Pat and O'Neil, Patrick and Shasha, Dennis},
  date = {1996-06},
  journaltitle = {SIGMOD Rec.},
  volume = {25},
  pages = {173--182},
  issn = {0163-5808},
  doi = {10.1145/235968.233330},
  url = {http://dx.doi.org/10.1145/235968.233330},
  abstract = {Update anywhere-anytime-anyway transactional replication has unstable behavior as the workload scales up: a ten-fold increase in nodes and traffic gives a thousand fold increase in deadlocks or reconciliations. Master copy replication (primary copy) schemes reduce this problem. A simple analytic model demonstrates these results. A new two-tier replication algorithm is proposed that allows mobile (disconnected) applications to propose tentative update transactions that are later applied to a master copy. Commutative update transactions avoid the instability of other replication schemes.},
  keywords = {aims-cloud,db},
  number = {2}
}

@book{gray_transaction_1992,
  title = {Transaction {{Processing}}: {{Concepts}} and {{Techniques}}},
  author = {Gray, Jim and Reuter, Andreas},
  date = {1992},
  edition = {1st},
  publisher = {{Morgan Kaufmann Publishers Inc.}},
  location = {{San Francisco, CA, USA}},
  url = {http://portal.acm.org/citation.cfm?id=573304},
  abstract = {From the Publisher:The key to client/server computing. Transaction processing techniques are deeply ingrained in the fields of databases and operating systems and are used to monitor, control and update information in modern computer systems. This book will show you how large, distributed, heterogeneous computer systems can be made to work reliably. Using transactions as a unifying conceptual framework, the authors show how to build high-performance distributed systems and high-availability applications with finite budgets and risk. The authors provide detailed explanations of why various problems occur as well as practical, usable techniques for their solution. Throughout the book, examples and techniques are drawn from the most successful commercial and research systems. Extensive use of compilable C code fragments demonstrates the many transaction processing algorithms presented in the book. The book will be valuable to anyone interested in implementing distributed systems or client/server architectures.},
  isbn = {1-55860-190-2},
  keywords = {aims-cloud}
}

@inproceedings{junqueira_zab_2011,
  title = {Zab: {{High}}-Performance Broadcast for Primary-Backup Systems},
  shorttitle = {Zab},
  booktitle = {2011 {{IEEE}}/{{IFIP}} 41st {{International Conference}} on {{Dependable Systems Networks}} ({{DSN}})},
  author = {Junqueira, Flavio P. and Reed, Benjamin C. and Serafini, Marco},
  date = {2011-06},
  pages = {245--256},
  issn = {2158-3927},
  doi = {10.1109/DSN.2011.5958223},
  abstract = {Zab is a crash-recovery atomic broadcast algorithm we designed for the ZooKeeper coordination service. ZooKeeper implements a primary-backup scheme in which a primary process executes clients operations and uses Zab to propagate the corresponding incremental state changes to backup processes. Due the dependence of an incremental state change on the sequence of changes previously generated, Zab must guarantee that if it delivers a given state change, then all other changes it depends upon must be delivered first. Since primaries may crash, Zab must satisfy this requirement despite crashes of primaries.},
  eventtitle = {2011 {{IEEE}}/{{IFIP}} 41st {{International Conference}} on {{Dependable Systems Networks}} ({{DSN}})},
  file = {/Users/qadahtm/Zotero/storage/2QEDA2F4/5958223.html},
  keywords = {Asynchronous consensus,Atomic broadcast,Computer crashes,Distributed algorithms,Fault tolerance,History,Primary backup,Proposals,Protocols,Radiation detectors,Synchronization}
}

@article{kallman_h-store_2008,
  title = {H-Store: {{A High}}-Performance, {{Distributed Main Memory Transaction Processing System}}},
  author = {Kallman, Robert and Kimura, Hideaki and Natkins, Jonathan and Pavlo, Andrew and Rasin, Alexander and Zdonik, Stanley and Jones, Evan P. C. and Madden, Samuel and Stonebraker, Michael and Zhang, Yang and Hugg, John and Abadi, Daniel J.},
  date = {2008-08},
  journaltitle = {Proc. VLDB Endow.},
  volume = {1},
  pages = {1496--1499},
  issn = {2150-8097},
  doi = {10.14778/1454159.1454211},
  url = {http://dx.doi.org/10.14778/1454159.1454211},
  abstract = {Our previous work has shown that architectural and application shifts have resulted in modern OLTP databases increasingly falling short of optimal performance [10]. In particular, the availability of multiple-cores, the abundance of main memory, the lack of user stalls, and the dominant use of stored procedures are factors that portend a clean-slate redesign of RDBMSs. This previous work showed that such a redesign has the potential to outperform legacy OLTP databases by a significant factor. These results, however, were obtained using a bare-bones prototype that was developed just to demonstrate the potential of such a system. We have since set out to design a more complete execution platform, and to implement some of the ideas presented in the original paper. Our demonstration presented here provides insight on the development of a distributed main memory OLTP database and allows for the further study of the challenges inherent in this operating environment.},
  keywords = {kv-oltp},
  number = {2}
}

@book{kemme_database_2010,
  title = {Database Replication},
  author = {Kemme, Bettina and Jiménez-Peris, Ricardo and Patiño-Martínez, Marta},
  date = {2010},
  publisher = {{Morgan \& Claypool Publishers}},
  doi = {10.2200/S00296ED1V01Y201008DTM007},
  url = {https://doi.org/10.2200/S00296ED1V01Y201008DTM007},
  bibsource = {dblp computer science bibliography, https://dblp.org},
  biburl = {https://dblp.org/rec/series/synthesis/2010Kemme.bib},
  series = {Synthesis Lectures on Data Management},
  timestamp = {Tue, 16 May 2017 14:24:20 +0200}
}

@inproceedings{kemme_database_2010-1,
  title = {Database Replication: {{A}} Tutorial},
  booktitle = {Replication: {{Theory}} and Practice},
  author = {Kemme, Bettina and Jiménez-Peris, Ricardo and Patiño-Martínez, Marta and Alonso, Gustavo},
  editor = {Charron-Bost, Bernadette and Pedone, Fernando and Schiper, André},
  date = {2010},
  volume = {5959},
  pages = {219--252},
  publisher = {{Springer}},
  doi = {10.1007/978-3-642-11294-2\\_12},
  bibsource = {dblp computer science bibliography, https://dblp.org},
  biburl = {https://dblp.org/rec/conf/replication/KemmeJPA10.bib},
  series = {Lecture Notes in Computer Science},
  timestamp = {Tue, 14 May 2019 10:00:48 +0200}
}

@inproceedings{kemme_dont_2000,
  title = {Don'{{T Be Lazy}}, {{Be Consistent}}: {{Postgres}}-{{R}}, {{A New Way}} to {{Implement Database Replication}}},
  booktitle = {Proc. {{VLDB}}},
  author = {Kemme, Bettina and Alonso, Gustavo},
  date = {2000},
  pages = {134--143},
  publisher = {{Morgan Kaufmann Publishers Inc.}},
  url = {http://dl.acm.org/citation.cfm?id=645926.671855},
  keywords = {deterministic,quecc}
}

@article{kemme_new_2000,
  title = {A New Approach to Developing and Implementing Eager Database Replication Protocols},
  author = {Kemme, Bettina and Alonso, Gustavo},
  date = {2000-09-01},
  journaltitle = {ACM Transactions on Database Systems},
  shortjournal = {ACM Trans. Database Syst.},
  volume = {25},
  pages = {333--379},
  issn = {0362-5915},
  doi = {10.1145/363951.363955},
  url = {http://doi.org/10.1145/363951.363955},
  urldate = {2021-07-07},
  abstract = {Database replication is traditionally seen as a way to increase the availability and performance of distributed databases. Although a large number of protocols providing data consistency and fault-tolerance have been proposed, few of these ideas have ever been used in commercial products due to their complexity and performance implications. Instead, current products allow inconsistencies and often resort to centralized approaches which eliminates some of the advantages of replication. As an alternative, we propose a suite of replication protocols that addresses the main problems related to database replication. On the one hand, our protocols maintain data consistency and the same transactional semantics found in centralized systems. On the other hand, they provide flexibility and reasonable performance. To do so, our protocols take advantage of the rich semantics of group communication primitives and the relaxed isolation guarantees provided by most databases. This allows us to eliminate the possibility of deadlocks, reduce the message overhead and increase performance. A detailed simulation study shows the feasibility of the approach and the flexibility with which different types of bottlenecks can be circumvented.},
  file = {/Users/qadahtm/Zotero/storage/6KEJTVFE/Kemme and Alonso - 2000 - A new approach to developing and implementing eage.pdf},
  keywords = {database replication,fault-tolerance,group communication,isolation levels,one-copy-serializability,replica control,total error multicast},
  number = {3}
}

@article{kung_optimistic_1981,
  title = {On {{Optimistic Methods}} for {{Concurrency Control}}},
  author = {Kung, H. T. and Robinson, John T.},
  date = {1981-06},
  journaltitle = {ACM Trans. Database Syst.},
  volume = {6},
  pages = {213--226},
  issn = {0362-5915},
  doi = {10.1145/319566.319567},
  url = {http://dx.doi.org/10.1145/319566.319567},
  abstract = {Most current approaches to concurrency control in database systems rely on locking of data objects as a control mechanism. In this paper, two families of nonlocking concurrency controls are presented. The methods used are ” optimistic” in the sense that they rely mainly on transaction backup as a control mechanism, ” hoping” that conflicts between transactions will not occur. Applications for which these methods should be more efficient than locking are discussed.},
  keywords = {kv-oltp},
  number = {2}
}

@article{lamport_part-time_1998,
  title = {The Part-Time Parliament},
  author = {Lamport, Leslie},
  date = {1998-05-01},
  journaltitle = {ACM Transactions on Computer Systems},
  shortjournal = {ACM Trans. Comput. Syst.},
  volume = {16},
  pages = {133--169},
  issn = {0734-2071},
  doi = {10.1145/279227.279229},
  url = {http://doi.org/10.1145/279227.279229},
  urldate = {2021-05-05},
  abstract = {Recent archaeological discoveries on the island of Paxos reveal that the parliament functioned despite the peripatetic propensity of its part-time legislators. The legislators maintained consistent copies of the parliamentary record, despite their frequent forays from the chamber and the forgetfulness of their messengers. The Paxon parliament's protocol provides a new way of implementing the state machine approach to the design of distributed systems.},
  file = {/Users/qadahtm/Zotero/storage/3PE6W53L/Lamport - 1998 - The part-time parliament.pdf},
  keywords = {state machines,three-phase commit,voting},
  number = {2}
}

@inproceedings{lamport_paxos_2002,
  title = {Paxos Made Simple, Fast, and Byzantine},
  booktitle = {Procedings of the 6th International Conference on Principles of Distributed Systems. {{OPODIS}} 2002, Reims, France, December 11-13, 2002},
  author = {Lamport, Leslie},
  editor = {Bui, Alain and Fouchal, Hacène},
  date = {2002},
  volume = {3},
  pages = {7--9},
  publisher = {{Suger, Saint-Denis, rue Catulienne, France}},
  bibsource = {dblp computer science bibliography, https://dblp.org},
  biburl = {https://dblp.org/rec/conf/opodis/Lamport02.bib},
  series = {Studia Informatica Universalis},
  timestamp = {Tue, 20 Jan 2004 14:15:23 +0100}
}

@article{lu_aria_2020,
  title = {Aria: A Fast and Practical Deterministic {{OLTP}} Database},
  shorttitle = {Aria},
  author = {Lu, Yi and Yu, Xiangyao and Cao, Lei and Madden, Samuel},
  date = {2020-07-01},
  journaltitle = {Proceedings of the VLDB Endowment},
  shortjournal = {Proc. VLDB Endow.},
  volume = {13},
  pages = {2047--2060},
  issn = {2150-8097},
  doi = {10.14778/3407790.3407808},
  url = {https://doi.org/10.14778/3407790.3407808},
  urldate = {2021-07-10},
  abstract = {Deterministic databases are able to efficiently run transactions across different replicas without coordination. However, existing state-of-the-art deterministic databases require that transaction read/write sets are known before execution, making such systems impractical in many OLTP applications. In this paper, we present Aria, a new distributed and deterministic OLTP database that does not have this limitation. The key idea behind Aria is that it first executes a batch of transactions against the same database snapshot in an execution phase, and then deterministically (without communication between replicas) chooses those that should commit to ensure serializability in a commit phase. We also propose a novel deterministic reordering mechanism that allows Aria to order transactions in a way that reduces the number of conflicts. Our experiments on a cluster of eight nodes show that Aria outperforms systems with conventional nondeterministic concurrency control algorithms and the state-of-the-art deterministic databases by up to a factor of two on two popular benchmarks (YCSB and TPC-C).},
  number = {12}
}

@article{lu_star_2019,
  title = {{{STAR}}: {{Scaling Transactions}} through {{Asymmetric Replication}}},
  author = {Lu, Yi and Yu, Xiangyao and Madden, Samuel},
  date = {2019},
  journaltitle = {PVLDB},
  volume = {12},
  pages = {1316--1329},
  url = {http://www.vldb.org/pvldb/vol12/p1316-lu.pdf},
  number = {11}
}

@inproceedings{oki_viewstamped_1988,
  title = {Viewstamped {{Replication}}: {{A New Primary Copy Method}} to {{Support Highly}}-{{Available Distributed Systems}}},
  shorttitle = {Viewstamped {{Replication}}},
  booktitle = {Proceedings of the Seventh Annual {{ACM Symposium}} on {{Principles}} of Distributed Computing},
  author = {Oki, Brian M. and Liskov, Barbara H.},
  date = {1988-01-01},
  pages = {8--17},
  publisher = {{Association for Computing Machinery}},
  location = {{New York, NY, USA}},
  doi = {10.1145/62546.62549},
  url = {http://doi.org/10.1145/62546.62549},
  urldate = {2021-05-05},
  file = {/Users/qadahtm/Zotero/storage/888PJDWZ/Oki and Liskov - 1988 - Viewstamped Replication A New Primary Copy Method.pdf},
  isbn = {978-0-89791-277-8},
  series = {{{PODC}} '88}
}

@inproceedings{ongaro_search_2014,
  title = {In Search of an Understandable Consensus Algorithm},
  booktitle = {Proceedings of the 2014 {{USENIX}} Conference on {{USENIX Annual Technical Conference}}},
  author = {Ongaro, Diego and Ousterhout, John},
  date = {2014-06-19},
  pages = {305--320},
  publisher = {{USENIX Association}},
  location = {{USA}},
  abstract = {Raft is a consensus algorithm for managing a replicated log. It produces a result equivalent to (multi-)Paxos, and it is as efficient as Paxos, but its structure is different from Paxos; this makes Raft more understandable than Paxos and also provides a better foundation for building practical systems. In order to enhance understandability, Raft separates the key elements of consensus, such as leader election, log replication, and safety, and it enforces a stronger degree of coherency to reduce the number of states that must be considered. Results from a user study demonstrate that Raft is easier for students to learn than Paxos. Raft also includes a new mechanism for changing the cluster membership, which uses overlapping majorities to guarantee safety.},
  isbn = {978-1-931971-10-2},
  series = {{{USENIX ATC}}'14}
}

@book{ozsu_principles_2020,
  title = {Principles of {{Distributed Database Systems}}},
  author = {Özsu, M. Tamer and Valduriez, Patrick},
  date = {2020},
  edition = {4},
  publisher = {{Springer International Publishing}},
  doi = {10.1007/978-3-030-26253-2},
  url = {https://www.springer.com/gp/book/9783030262525},
  urldate = {2020-12-06},
  abstract = {The fourth edition of this classic textbook provides major updates. This edition has completely new chapters on Big Data Platforms (distributed storage systems, MapReduce, Spark, data stream processing, graph analytics) and on NoSQL, NewSQL and polystore systems. It also includes an updated web data management chapter that includes RDF and semantic web discussion, an integrated database integration chapter focusing both on schema integration and querying over these systems. The peer-to-peer computing chapter has been updated with a discussion of blockchains. The chapters that describe classical distributed and parallel database technology have all been updated. The new edition covers the breadth and depth of the field from a modern viewpoint. Graduate students, as well as senior undergraduate students studying computer science and other related fields will use this book as a primary textbook. Researchers working in computer science will also find this textbook useful.This textbook has a companion web site that includes background information on relational database fundamentals, query processing, transaction management, and computer networks for those who might need this background. The web site also includes all the figures and presentation slides as well as solutions to exercises (restricted to instructors).},
  file = {/Users/qadahtm/Zotero/storage/JXM2ZKCC/9783030262525.html},
  isbn = {978-3-030-26252-5},
  langid = {english}
}

@inproceedings{qadah_q-store_2020,
  title = {Q-Store: {{Distributed}}, Multi-Partition Transactions via Queue-Oriented Execution and Communication},
  booktitle = {Proceedings of the 23rd International Conference on Extending Database Technology, {{EDBT}} 2020, Copenhagen, Denmark, March 30 - April 02, 2020},
  author = {Qadah, Thamir and Gupta, Suyash and Sadoghi, Mohammad},
  editor = {Bonifati, Angela and Zhou, Yongluan and Salles, Marcos Antonio Vaz and Böhm, Alexander and Olteanu, Dan and Fletcher, George H. L. and Khan, Arijit and Yang, Bin},
  date = {2020},
  pages = {73--84},
  publisher = {{OpenProceedings.org}},
  doi = {10.5441/002/edbt.2020.08},
  url = {https://doi.org/10.5441/002/edbt.2020.08},
  bibsource = {dblp computer science bibliography, https://dblp.org},
  biburl = {https://dblp.org/rec/conf/edbt/QadahGS20.bib},
  timestamp = {Fri, 28 Aug 2020 08:05:19 +0200}
}

@inproceedings{qadah_quecc:_2018,
  title = {{{QueCC}}: {{A Queue}}-Oriented, {{Control}}-Free {{Concurrency Architecture}}},
  shorttitle = {{{QueCC}}},
  booktitle = {Proceedings of the 19th {{International Middleware Conference}}},
  author = {Qadah, Thamir M. and Sadoghi, Mohammad},
  date = {2018},
  pages = {13--25},
  publisher = {{ACM}},
  location = {{New York, NY, USA}},
  doi = {10.1145/3274808.3274810},
  url = {http://doi.acm.org/10.1145/3274808.3274810},
  urldate = {2019-10-02},
  abstract = {We investigate a coordination-free approach to transaction processing on emerging multi-sockets, many-core, shared-memory architecture to harness its unprecedented available parallelism. We propose a queue-oriented, control-free concurrency architecture, referred to as QueCC, that exhibits minimal contention among concurrent threads by eliminating the overhead of concurrency control from the critical path of the transaction. QueCC operates on batches of transactions in two deterministic phases of priority-based planning followed by control-free execution. We extensively evaluate our transaction execution architecture and compare its performance against seven state-of-the-art concurrency control protocols designed for in-memory stores. We demonstrate that QueCC can significantly outperform state-of-the-art concurrency control protocols under high-contention by up to 6.3x. Moreover, our results show that QueCC can process nearly 40 million YCSB transactional operations per second while maintaining serializability guarantees with write-intensive workloads. Remarkably, QueCC out-performs H-Store by up to two orders of magnitude.},
  file = {/Users/qadahtm/Zotero/storage/SC3MW6GE/Qadah and Sadoghi - 2018 - QueCC A Queue-oriented, Control-free Concurrency .pdf},
  isbn = {978-1-4503-5702-9},
  series = {Middleware '18},
  venue = {Rennes, France}
}

@inproceedings{qadah_queue-oriented_2019,
  title = {A Queue-Oriented Transaction Processing Paradigm},
  booktitle = {Proceedings of the 20th International Middleware Conference Doctoral Symposium, Middleware 2019, Davis, {{CA}}, {{USA}}, December 09-13, 2019},
  author = {Qadah, Thamir M.},
  editor = {Nawab, Faisal and Riviere, Etienne},
  date = {2019},
  pages = {26--30},
  publisher = {{ACM}},
  doi = {10.1145/3366624.3368163},
  url = {https://doi.org/10.1145/3366624.3368163},
  bibsource = {dblp computer science bibliography, https://dblp.org},
  biburl = {https://dblp.org/rec/conf/middleware/Qadah19.bib},
  timestamp = {Tue, 04 Feb 2020 14:24:35 +0100}
}

@article{sadoghi_transaction_2019,
  title = {Transaction {{Processing}} on {{Modern Hardware}}},
  author = {Sadoghi, Mohammad and Blanas, Spyros},
  date = {2019-03-08},
  journaltitle = {Synthesis Lectures on Data Management},
  shortjournal = {Synthesis Lectures on Data Management},
  volume = {14},
  pages = {1--138},
  issn = {2153-5418},
  doi = {10.2200/S00896ED1V01Y201901DTM058},
  url = {https://www.morganclaypool.com/doi/10.2200/S00896ED1V01Y201901DTM058},
  urldate = {2019-12-03},
  file = {/Users/qadahtm/Zotero/storage/C8AABTIZ/S00896ED1V01Y201901DTM058.html},
  number = {2}
}

@software{snappy_2021,
  title = {Google/Snappy},
  date = {2021-05-08T12:49:12Z},
  origdate = {2014-03-03T21:58:09Z},
  url = {https://github.com/google/snappy},
  urldate = {2021-05-08},
  abstract = {A fast compressor/decompressor. Contribute to google/snappy development by creating an account on GitHub.},
  organization = {{Google}}
}

@inproceedings{thomson_calvin_2012,
  title = {Calvin: {{Fast Distributed Transactions}} for {{Partitioned Database Systems}}},
  booktitle = {Proc. {{SIGMOD}}},
  author = {Thomson, Alexander and Diamond, Thaddeus and Weng, Shu C. and Ren, Kun and Shao, Philip and Abadi, Daniel J.},
  date = {2012},
  pages = {1--12},
  publisher = {{ACM}},
  doi = {10.1145/2213836.2213838},
  url = {http://dx.doi.org/10.1145/2213836.2213838},
  abstract = {Many distributed storage systems achieve high data access throughput via partitioning and replication, each system with its own advantages and tradeoffs. In order to achieve high scalability, however, today's systems generally reduce transactional support, disallowing single transactions from spanning multiple partitions. Calvin is a practical transaction scheduling and data replication layer that uses a deterministic ordering guarantee to significantly reduce the normally prohibitive contention costs associated with distributed transactions. Unlike previous deterministic database system prototypes, Calvin supports disk-based storage, scales near-linearly on a cluster of commodity machines, and has no single point of failure. By replicating transaction inputs rather than effects, Calvin is also able to support multiple consistency levels—including Paxos-based strong consistency across geographically distant replicas—at no cost to transactional throughput.},
  isbn = {978-1-4503-1247-9},
  keywords = {kv-oltp}
}

@inproceedings{wiesmann_understanding_2000,
  title = {Understanding Replication in Databases and Distributed Systems},
  booktitle = {Proceedings 20th {{IEEE International Conference}} on {{Distributed Computing Systems}}},
  author = {Wiesmann, M. and Pedone, F. and Schiper, A. and Kemme, B. and Alonso, G.},
  date = {2000-04},
  pages = {464--474},
  issn = {1063-6927},
  doi = {10.1109/ICDCS.2000.840959},
  abstract = {Replication is an area of interest to both distributed systems and databases. The solutions developed from these two perspectives are conceptually similar but differ in many aspects: model, assumptions, mechanisms, guarantees provided, and implementation. In this paper, we provide an abstract and "neutral" framework to compare replication techniques from both communities. The framework has been designed to emphasize the role played by different mechanisms and to facilitate comparisons. The paper describes the replication techniques used in both communities, compares them, and points out ways in which they can be integrated to arrive to better, more robust replication protocols.},
  eventtitle = {Proceedings 20th {{IEEE International Conference}} on {{Distributed Computing Systems}}},
  file = {/Users/qadahtm/Zotero/storage/CDSLWCDH/Wiesmann et al. - 2000 - Understanding replication in databases and distrib.pdf;/Users/qadahtm/Zotero/storage/659MHUGV/840959.html},
  keywords = {Broadcasting,Contracts,Distributed databases,Fault tolerant systems,Information systems,Laboratories,Mechanical factors,Protocols,Space exploration,Transaction databases}
}

@article{yao_exploiting_2016,
  title = {Exploiting {{Single}}-{{Threaded Model}} in {{Multi}}-{{Core In}}-{{Memory Systems}}},
  author = {Yao, C. and Agrawal, D. and Chen, G. and Lin, Q. and Ooi, B. C. and Wong, W. F. and Zhang, M.},
  date = {2016},
  journaltitle = {IEEE TKDE},
  volume = {28},
  pages = {2635--2650},
  doi = {10.1109/TKDE.2016.2578319},
  url = {http://dx.doi.org/10.1109/TKDE.2016.2578319},
  keywords = {oltp,quecc},
  number = {10}
}

@inproceedings{zamanian_chiller_2020,
  title = {Chiller: {{Contention}}-Centric {{Transaction Execution}} and {{Data Partitioning}} for {{Modern Networks}}},
  shorttitle = {Chiller},
  booktitle = {Proceedings of the 2020 {{ACM SIGMOD International Conference}} on {{Management}} of {{Data}}},
  author = {Zamanian, Erfan and Shun, Julian and Binnig, Carsten and Kraska, Tim},
  date = {2020-06-11},
  pages = {511--526},
  publisher = {{Association for Computing Machinery}},
  location = {{Portland, OR, USA}},
  doi = {10.1145/3318464.3389724},
  url = {http://doi.org/10.1145/3318464.3389724},
  urldate = {2020-06-20},
  abstract = {Distributed transactions on high-overhead TCP/IP-based networks were conventionally considered to be prohibitively expensive and thus were avoided at all costs. To that end, the primary goal of almost any existing partitioning scheme is to minimize the number of cross-partition transactions. However, with the new generation of fast RDMA-enabled networks, this assumption is no longer valid. In fact, recent work has shown that distributed databases can scale even when the majority of transactions are cross-partition. In this paper, we first make the case that the new bottleneck which hinders truly scalable transaction processing in modern RDMA-enabled databases is data contention, and that optimizing for data contention leads to different partitioning layouts than optimizing for the number of distributed transactions. We then present Chiller, a new approach to data partitioning and transaction execution, which aims to minimize data contention for both local and distributed transactions. Finally, we evaluate Chiller using various workloads, and show that our partitioning and execution strategy outperforms traditional partitioning techniques which try to avoid distributed transactions, by up to a factor of 2.},
  file = {/Users/qadahtm/Zotero/storage/KQE78A22/Zamanian et al. - 2020 - Chiller Contention-centric Transaction Execution .pdf},
  isbn = {978-1-4503-6735-6},
  keywords = {data partitioning,distributed transactions,RDMA},
  series = {{{SIGMOD}} '20}
}






\end{document}